\begin{document}
\selectlanguage{english}

\title*{%
${\mathbb{Z}}_2{\mathbb{Z}}_4$-additive cyclic codes,\\ generator polynomials and dual codes
 \thanks{%
This work has been partially supported by the Spanish MINECO grant TIN2013-40524-P and by the Catalan AGAUR grant 2014SGR-691.
 }
}

\author{
 Joaquim Borges Ayats%
\and
 Cristina Fern\'andez-C\'ordoba %
\and
Roger Ten-Valls%
}

\institute{
 Department of Information and Communication Engineering, Universitat Aut\`{o}noma de Barcelona, 08193-Bellaterra, Spain.
  \url{{joaquim.borges, cristina.fernandez, roger.ten}@uab.cat}%
}
\maketitle


\begin{abstract}
A ${\mathbb{Z}}_2{\mathbb{Z}}_4$-additive code ${\cal C}\subseteq{\mathbb{Z}}_2^\alpha\times{\mathbb{Z}}_4^\beta$ is called cyclic if the set of coordinates can be partitioned into two subsets, the set of ${\mathbb{Z}}_2$ and the set of ${\mathbb{Z}}_4$ coordinates, such that any cyclic shift of the coordinates of both subsets leaves the code invariant. These codes can be identified as submodules of the $\mathbb{Z}_4[x]$-module $\mathbb{Z}_2[x]/(x^\alpha-1)\times\mathbb{Z}_4[x]/(x^\beta-1)$. The parameters of a ${\mathbb{Z}}_2{\mathbb{Z}}_4$-additive cyclic code are stated in terms of the degrees of the generator polynomials of the code. The generator polynomials of the dual code of a ${\mathbb{Z}}_2{\mathbb{Z}}_4$-additive cyclic code are determined in terms of the generator polynomials of the code ${\cal C}$.
\end{abstract}

\begin{keywords}
Binary cyclic codes, Cyclic codes over $\mathbb{Z}_4$, Duality, ${\mathbb{Z}}_2{\mathbb{Z}}_4$-additive cyclic codes.
\end{keywords}

\section{Introduction}

{D}{enote} by  ${\mathbb{Z}}_2$ and ${\mathbb{Z}}_4$ the rings of integers modulo 2 and modulo 4,
respectively.  We denote the space of $n$-tuples over these rings as ${\mathbb{Z}}_2^n$   and ${\mathbb{Z}}_4^n$.
A binary code is any non-empty subset $C$ of ${\mathbb{Z}}_2^n$. If that subset is a vector space then we say that it is a linear code. A code over $\mathbb{Z}_4$ is a non-empty subset ${\cal C}$ of ${\mathbb{Z}}_4^n$ and a submodule of ${\mathbb{Z}}_4^n$ is called a linear code over $\mathbb{Z}_4$.

In   Delsarte's 1973 paper (see
\cite{del}),  he defined additive codes as  subgroups of the underlying abelian
group in a translation association scheme. For the binary Hamming scheme, 
namely, when the underlying abelian group is
of order $2^{n}$, the only structures for the abelian group are
those of the form ${\mathbb{Z}}_2^\alpha\times {\mathbb{Z}}_4^\beta$, with
$\alpha+2\beta=n$. 
This means that  the subgroups ${\cal C}$ of
${\mathbb{Z}}_2^\alpha\times {\mathbb{Z}}_4^\beta$ are the only additive codes in a
binary Hamming scheme. In \cite{AddDual}, $\mathbb{Z}_2\mathbb{Z}_4$-additive codes were studied.

For vectors ${\bf  u} \in {\mathbb{Z}}_2^\alpha\times{\mathbb{Z}}_4^\beta$ we write
 ${\bf  u}=(u\mid  u')$ where
$ u=(u_0,\dots,u_{\alpha-1})\in{\mathbb{Z}}_2^\alpha$ and
$ u'=(u'_0,\dots,u'_{\beta-1})\in{\mathbb{Z}}_4^\beta$.

Let ${\cal C}$ be a ${\mathbb{Z}}_2{\mathbb{Z}}_4$-additive code. Since ${\cal 
C}$ is a subgroup of $\mathbb{Z}_2^\alpha\times\mathbb{Z}_4^\beta$, it is also 
isomorphic to a commutative structure like 
$\mathbb{Z}_2^\gamma\times\mathbb{Z}_4^\delta$.
Therefore, ${\cal C}$ is of type $2^\gamma 4^\delta$ as a group, it has $|{\cal 
C}| = 2^{\gamma +2\delta}$ codewords and the number of order two codewords in 
${\cal C}$ is $2^{\gamma +\delta}$.

Let $X$ (respectively $Y$) be the set of $\mathbb{Z}_2$ (respectively 
$\mathbb{Z}_4$) coordinate positions, so $|X| =\alpha$ and $|Y| = \beta$. Unless 
otherwise stated, the set $X$ corresponds to the first $\alpha$ coordinates and
$Y$ corresponds to the last $\beta$ coordinates. Call ${\cal C}_X$ 
(respectively ${\cal C}_Y )$ the punctured code of ${\cal C}$ by deleting the 
coordinates outside $X$ (respectively $Y$). Let ${\cal C}_b$ be the subcode of 
${\cal C}$ which contains all order two codewords and let $\kappa$ be the 
dimension of $({\cal C}_b)_X$, which is a binary linear code. For the case 
$\alpha = 0$, we will write $\kappa = 0$.

Considering all these parameters, we will say that ${\cal C}$ is of type 
$(\alpha, \beta; \gamma , \delta; \kappa)$. Notice that ${\cal C}_Y$ is a linear code over $\mathbb{Z}_4$ of type $(0, \beta; \gamma_Y , \delta; 0)$, where
$0 \leq \gamma_Y \leq \gamma$, and ${\cal C}_X$ is a binary linear code of type 
$(\alpha, 0; \gamma_X , 0; \gamma_X )$, where $\kappa \leq \gamma_X \leq\kappa + 
\delta$. A ${\mathbb{Z}_2 {\mathbb{Z}_4}}$-additive code ${\cal C}$ is said to 
be 
separable if ${\cal C} = {\cal C}_X \times {\cal C}_Y$. 

Let $\kappa_1$ and $\delta_2$ be the dimensions of the subcodes $\{(u \mid 0\dots 0)\in {\cal C}\}$ and $\{( 0\dots 0\mid u')\in {\cal C} : \mbox{ the order of }u'\mbox{ is } 4\}$, respectively. Define $\kappa_2=\kappa -\kappa_1$ and $\delta_1=\delta -\delta_2$. By definition, it is clear that a ${\mathbb{Z}_2{\mathbb{Z}_4}}$-additive code is separable if and only if $\kappa_2$ and $\delta_1$ are zero; that is, $\kappa=\kappa_1$ and $\delta=\delta_2$.

We define a Gray Map as $\phi: {\mathbb{Z}}_2^\alpha \times{\mathbb{Z}}_4^\beta\rightarrow\mathbb{Z}_2^{\alpha+2\beta}$ such that
$\phi(\textbf{u})=\phi( u \mid  u') = ( u, \phi_4( u'))$, where $\phi_4$ is the usual quaternary Gray map defined by 
$\phi_4(0) = (0,0), 
\phi_4(1) = (0,1), 
\phi_4(2) = (1,1), 
\phi_4(3) = (1,0).$

The \textit{standard inner product}, defined in \cite{AddDual}, can be written as 
$$\textbf{u}\cdot \textbf{v} = 2\left(\sum_{i=0}^{\alpha-1}u_iv_i\right)+\sum_{j=0}^{\beta-1}u'_jv'_j\in \mathbb{Z}_4,$$
where the computations are made taking the zeros and ones in the $\alpha$ binary coordinates as zeros and ones in $\mathbb{Z}_4$, respectively. The \textit{dual code} of ${\cal C}$, is defined in the standard way by
$${\cal C}^\perp=\{\textbf{v} \in \mathbb{Z}_2^\alpha\times\mathbb{Z}_4^\beta \mid \textbf{u}\cdot\textbf{v}=0, \mbox{ for all }\textbf{u}\in{\cal C}\}.$$

If ${\cal C}$ is separable then ${\cal C}^\perp=({\cal C}_X)^\perp\times ({\cal 
C}_Y)^\perp$. From \cite{AddDual}, and the previous definition of $\kappa_1$ and $\delta_1$ 
we obtain the number of codewords of ${\cal C}$, ${\cal C}_X$, ${\cal C}_Y$ and 
their duals.

\begin{proposition}\label{Dimension_subcodes}
Let ${\cal C}$ be a ${\mathbb{Z}_2 {\mathbb{Z}_4}}$-additive code of type 
$(\alpha, \beta; \gamma , \delta; \kappa)$. Let $\kappa_1$ and $\delta_1$ be defined as before. Then,
\begin{align*}
|{\cal C}|= 2^\gamma 4^\delta, & \quad |{\cal C}^\perp|= 
 2^{\alpha+\gamma-2\kappa}4^{\beta-\gamma-\delta+\kappa},\\
|{\cal C}_X|=2^{\kappa+\delta_1}, & \quad |({\cal 
C}_X)^\perp|=2^{\alpha-\kappa-\delta_1},\\
 |{\cal C}_Y|=2^{\gamma-\kappa_1}4^\delta , & \quad |({\cal 
C}_Y)^\perp|=2^{\gamma-\kappa_1}4^{\beta-\gamma-\delta+\kappa_1}.  
\end{align*}
\end{proposition}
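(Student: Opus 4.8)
The plan is to obtain all six cardinalities from three inputs, all available above or in \cite{AddDual}: the meaning of ``type $2^\gamma4^\delta$''; the non-degeneracy of the standard inner product, which gives $|{\cal C}|\cdot|{\cal C}^\perp|=2^\alpha4^\beta$ and, likewise for the binary and quaternary duals, $|{\cal C}_X|\cdot|({\cal C}_X)^\perp|=2^\alpha$ and $|{\cal C}_Y|\cdot|({\cal C}_Y)^\perp|=4^\beta$; and the first isomorphism theorem applied to the two coordinate projections $\pi_X\colon{\cal C}\to\mathbb{Z}_2^\alpha$ and $\pi_Y\colon{\cal C}\to\mathbb{Z}_4^\beta$.

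The identity $|{\cal C}|=2^{\gamma+2\delta}=2^\gamma4^\delta$ is just the definition of the type of ${\cal C}$ as an abelian group, already recalled above. For the dual, non-degeneracy yields $|{\cal C}^\perp|=2^\alpha4^\beta/|{\cal C}|=2^{\alpha-\gamma}4^{\beta-\delta}$, and a one-line computation shows $2^{\alpha-\gamma}4^{\beta-\delta}=2^{\alpha+\gamma-2\kappa}4^{\beta-\gamma-\delta+\kappa}$; equivalently, one may quote directly from \cite{AddDual} that ${\cal C}^\perp$ has type $(\alpha,\beta;\alpha+\gamma-2\kappa,\beta-\gamma-\delta+\kappa;\alpha-\kappa)$ and apply the first identity to it.

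Since ${\cal C}_X=\pi_X({\cal C})$ and ${\cal C}_Y=\pi_Y({\cal C})$, it suffices to compute the two kernels. The kernel of $\pi_Y$ is the subcode $\{(u\mid 0)\in{\cal C}\}$, of order $2^{\kappa_1}$ by definition of $\kappa_1$; hence $|{\cal C}_Y|=|{\cal C}|/2^{\kappa_1}=2^{\gamma-\kappa_1}4^\delta$ and $|({\cal C}_Y)^\perp|=4^\beta/|{\cal C}_Y|$, which equals $2^{\gamma-\kappa_1}4^{\beta-\gamma-\delta+\kappa_1}$ after expanding the power of $4$. The kernel of $\pi_X$ is the quaternary linear code ${\cal D}:=\{(0\mid u')\in{\cal C}\}$, of type $(0,\beta;\gamma_0,\delta_2;0)$ for some $\gamma_0\geq 0$, where $\delta_2$ is precisely the parameter of the statement (the number of order-four generators of ${\cal D}$), so $|{\cal D}|=2^{\gamma_0}4^{\delta_2}$. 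To pin down $\gamma_0$ we restrict $\pi_X$ to the order-two subcode ${\cal C}_b$: its image is $({\cal C}_b)_X$, of dimension $\kappa$, and its kernel is exactly the $2$-torsion subgroup of ${\cal D}$, of order $2^{\gamma_0+\delta_2}$; since $|{\cal C}_b|=2^{\gamma+\delta}$, this forces $\gamma_0+\delta_2=\gamma+\delta-\kappa$. Consequently $|{\cal D}|=2^{\gamma_0+2\delta_2}=2^{(\gamma+\delta-\kappa)+\delta_2}$, and using $\delta_1=\delta-\delta_2$ we get $|{\cal C}_X|=|{\cal C}|/|{\cal D}|=2^{(\gamma+2\delta)-(\gamma+\delta-\kappa+\delta_2)}=2^{\kappa+\delta_1}$, whence $|({\cal C}_X)^\perp|=2^\alpha/|{\cal C}_X|=2^{\alpha-\kappa-\delta_1}$.

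The only step requiring care is the computation of $|{\cal C}_X|$: one must read $\delta_2$ correctly as the $\delta$-parameter of ${\cal D}=\{(0\mid u')\in{\cal C}\}$ and then feed in the two facts about ${\cal C}_b$ (namely $|{\cal C}_b|=2^{\gamma+\delta}$ and $\dim({\cal C}_b)_X=\kappa$) to evaluate $\gamma_0$. Everything else is the first isomorphism theorem plus elementary counting of $2$-torsion in groups of the form $\mathbb{Z}_2^\gamma\times\mathbb{Z}_4^\delta$; alternatively, all of the cardinalities can be read off directly from the standard generator matrix of ${\cal C}$ exhibited in \cite{AddDual}.
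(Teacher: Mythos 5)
Your proof is correct, but it is not the paper's route: the paper gives no argument beyond ``From \cite{AddDual} and the definitions of $\kappa_1$ and $\delta_1$ we obtain\dots'', i.e.\ it reads the six cardinalities off the dual-type theorem of \cite{AddDual} together with the block generator matrix ${\cal G}_{\cal C}$ displayed right after, in which $\kappa_1,\kappa_2,\delta_1,\delta_2$ occur as block sizes (so ${\cal C}_X$ is spanned by $\kappa+\delta_1$ independent binary rows, and the rows with zero $X$-part give the kernels directly). What you do instead --- first isomorphism theorem for $\pi_X$ and $\pi_Y$, counting $\ker\pi_Y=\{(u\mid 0)\in{\cal C}\}$ via $\kappa_1$, and pinning down the type $2^{\gamma_0}4^{\delta_2}$ of ${\cal D}=\ker\pi_X$ by comparing $|{\cal C}_b|=2^{\gamma+\delta}$ with $\dim({\cal C}_b)_X=\kappa$ and the $2$-torsion of ${\cal D}$ --- is a clean, matrix-free derivation, and the count $\gamma_0+\delta_2=\gamma+\delta-\kappa$ is exactly right; it buys self-containedness and makes visible where each of $\kappa_1,\delta_1,\delta_2$ enters, at the cost of being longer than the paper's one-line appeal to \cite{AddDual}. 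Two small cautions. First, $|{\cal C}|\cdot|{\cal C}^\perp|=2^{\alpha}4^{\beta}$ is not a purely formal consequence of the inner product being ``non-degenerate''; it requires the character-group (annihilator) argument or, as you yourself offer, the dual-type formula of \cite{AddDual}, so keep that citation explicit. Second, the paper's literal definition of $\delta_2$ (the ``dimension'' of $\{(0\mid u')\in{\cal C}:\ \mathrm{ord}(u')=4\}$, a set which is not a subgroup) must indeed be read the way you read it, namely as the number of order-four generators of ${\cal D}=\{(0\mid u')\in{\cal C}\}$; this matches the block $I_{\delta_2}$ in ${\cal G}_{\cal C}$, but it is worth stating that interpretation explicitly, since your computation of $|{\cal C}_X|=2^{\kappa+\delta_1}$ hinges on it.
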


Let ${\cal C}$ be a ${\mathbb{Z}_2 {\mathbb{Z}_4}}$-additive code of type 
$(\alpha, \beta; \gamma , \delta; \kappa)$. Then, ${\cal C}$ is permutation 
equivalent to a ${\mathbb{Z}_2 {\mathbb{Z}_4}}$-additive code with generator 
matrix of the form
{
\begin{equation*}
{\cal G}_{\cal C} =
 \left(\begin{array}{cccc|ccccc}
  I_{\kappa_1} & T & T'_{b_1} & T_{b_1} & 0 & 0 & 0 & 0 & 0 \\
  0 & I_{\kappa_2} & T'_{b_2} & T_{b_2} & 2T_2 & 2T_{\kappa_2} & 0 & 0 & 0 \\
  0 & 0 & 0 & 0 & 2T_1 & 2T'_1 & 2I_{\gamma-\kappa} & 0 & 0 \\
  \hline
  0& 0& S_{\delta_1} & S_{b} & S_{11} & S_{12} & R_1 & I_{\delta_1} & 0 \\
  0& 0& 0 & 0 & S_{21} & S_{22}& R_2 & R_{\delta_1} & I_{\delta_2}  
 \end{array}\right)
\end{equation*}
}
where $I_{r}$ is the identity matrix of size $r\times r$; the matrices $T_{b_i}, T'_{b_i}, S_{\delta_1}, S_b$
are over ${\mathbb{Z}}_2$; the matrices $T_1, T_2, T_{\kappa_2}, T'_1, R_i$ are over ${\mathbb{Z}}_4$ with all entries in $\{0,1\}\subset{\mathbb{Z}}_4$; and $S_{ij}$ are matrices over ${\mathbb{Z}}_4$. The matrices $S_{\delta_1}$ and $T_{\kappa_2}$ are square matrices of full rank $\delta_1$ and $\kappa_2$ respectively, $\kappa=\kappa_1+\kappa_2$ and $\delta=\delta_1+\delta_2$.

This new generator matrix can be obtained by applying convenient column permutations 
and linear combinations of rows to the generator matrix giving in 
\cite{AddDual}. This new form is going to help us to relate the parameters of 
the code and the degrees of the generator polynomials of a 
$\mathbb{Z}_2\mathbb{Z}_4$-additive cyclic code.

\section{${\mathbb{Z}_2 {\mathbb{Z}_4}}$-additive cyclic codes}

\subsection{Parameters and generators}

Let ${\bf  u} = ( u \mid  u') \in {\mathbb{Z}}_2^\alpha \times {\mathbb{Z}}_4^\beta$ and $i$ be an integer. Then we denote by
\begin{align*}
\textbf{u}^{(i)}&= (u^{(i)}\mid u'^{(i)})\\
&=(u_{0+i},u_{1+i},\dots,u_{\alpha-1+i}\mid u'_{0+i},u'_{1+i},\dots,u'_{\beta-1+i})
\end{align*}
the cyclic $i$th shift of $\textbf{u}$, where the subscripts are read modulo $\alpha$ and $\beta$, respectively. 

We say that a ${\mathbb{Z}_2 {\mathbb{Z}_4}}$-additive code ${\cal C}\subseteq\mathbb{Z}_2^\alpha\times\mathbb{Z}_4^\beta$ is \textit{cyclic} if for any codeword $\textbf{u}\in {\cal C}$ we have $\textbf{u}^{(1)}\in{\cal C}$. 

Let $R_{\alpha,\beta}=\mathbb{Z}_2[x]/(x^\alpha-1)\times\mathbb{Z}_4[x]/(x^\beta-1)$, for 
$\beta\geq 0$ odd and define the operation $\star : \mathbb{Z}_4[x]\times R_{\alpha,\beta}\rightarrow R_{\alpha,\beta}$ as $\lambda(x)\star (p(x) \mid q(x)) = (\lambda(x)p(x) \bmod (2) \mid \lambda(x)q(x))$. From \cite{Abu}, we know that ${\mathbb{Z}_2 
{\mathbb{Z}_4}}$-additive cyclic code are identified as $\mathbb{Z}_4[x]$-submodules of 
$R_{\alpha,\beta}$. Moreover, if ${\cal C}$ is a ${\mathbb{Z}_2 
{\mathbb{Z}_4}}$-additive cyclic code of type $(\alpha, \beta; \gamma , \delta; 
\kappa)$, then it is of the form 
\begin{equation} \label{form}
{\cal C}=\langle (b(x)\mid{ 0}), (\ell(x) \mid f(x)h(x) +2f(x)) \rangle  
\end{equation}
where $f(x)h(x)g(x) = x^\beta -1$ in ${\mathbb{Z}}_4[x]$, $b(x), \ell(x)\in\mathbb{Z}_2[x]/(x^\alpha-1)$ with $b(x)|(x^\alpha-1)$, $deg(\ell(x)) < deg(b(x))$ and $b(x)$ divides $\frac{x^\beta -1}{f(x)} \ell(x) \pmod{2}.$

Note that if ${\cal C}$ is a ${\mathbb{Z}_2 {\mathbb{Z}_4}}$-additive cyclic code with ${\cal C}=\langle (b(x)\mid { 0}), (\ell(x) \mid  f(x)h(x) +2f(x)) \rangle $, then the canonical projections ${\cal C}_X$ and ${\cal C}_Y$ are a cyclic code over $\mathbb{Z}_2$ and a cyclic code over $\mathbb{Z}_4$ generated by $gcd(b(x),\ell(x))$ and $(f(x)h(x)+2f(x))$, respectively (see \cite{macwilliams}, \cite{Wan}).

Since $b(x)$ divides $\frac{x^\beta -1}{f(x)} \ell(x) \pmod{2},$ we have the following result.

\begin{corollary}\label{bdiviXSgcd}
Let ${\cal C}$ be a ${\mathbb{Z}_2 {\mathbb{Z}_4}}$-additive cyclic code of type $(\alpha, \beta; \gamma , \delta; \kappa)$ with ${\cal C} = \langle (b(x)\mid { 0}), (\ell(x) \mid  f(x)h(x) +2f(x)) \rangle$. Then, $b(x)$ divides $\frac{x^\beta -1}{f(x)} \gcd(b(x),\ell(x)) \pmod{2}$ and $b(x)$ divides $h(x) \gcd(b(x),\ell(x)g(x)) \pmod{2}.$
\end{corollary}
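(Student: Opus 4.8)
The plan is to carry out everything in the polynomial ring $\mathbb{Z}_2[x]$, where the asserted divisibilities take place, and to reduce both claims to the single hypothesis appearing in \eqref{form}, namely that $b(x)$ divides $\frac{x^\beta-1}{f(x)}\ell(x)\pmod 2$. The key preliminary observation is that $\frac{x^\beta-1}{f(x)}$ is a bona fide polynomial: from $f(x)h(x)g(x)=x^\beta-1$ in $\mathbb{Z}_4[x]$ we get $\frac{x^\beta-1}{f(x)}=h(x)g(x)$, and hence, after reduction, $\frac{x^\beta-1}{f(x)}\equiv h(x)g(x)\pmod 2$. The other ingredient is that $\mathbb{Z}_2[x]$ is a Euclidean domain, so any $\gcd$ that occurs is a $\mathbb{Z}_2[x]$-linear combination of its two arguments (Bézout).

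For the first divisibility I would set $d(x)=\gcd(b(x),\ell(x))$ in $\mathbb{Z}_2[x]$ and pick $a(x),c(x)\in\mathbb{Z}_2[x]$ with $d(x)=a(x)b(x)+c(x)\ell(x)$. Multiplying this identity by $\frac{x^\beta-1}{f(x)}$ splits $\frac{x^\beta-1}{f(x)}d(x)$ into the term $a(x)b(x)\frac{x^\beta-1}{f(x)}$, which is visibly a multiple of $b(x)$, and the term $c(x)\bigl(\frac{x^\beta-1}{f(x)}\ell(x)\bigr)$, which is a multiple of $b(x)$ because $b(x)\mid\frac{x^\beta-1}{f(x)}\ell(x)\pmod 2$ by \eqref{form}. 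Adding these shows $b(x)\mid\frac{x^\beta-1}{f(x)}\gcd(b(x),\ell(x))\pmod 2$.

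The second divisibility goes the same way: set $e(x)=\gcd(b(x),\ell(x)g(x))$ in $\mathbb{Z}_2[x]$, write $e(x)=a'(x)b(x)+c'(x)\ell(x)g(x)$, and multiply by $h(x)$. Using $h(x)g(x)\equiv\frac{x^\beta-1}{f(x)}\pmod 2$, the product $h(x)e(x)$ becomes $a'(x)h(x)b(x)+c'(x)\bigl(\frac{x^\beta-1}{f(x)}\ell(x)\bigr)\pmod 2$; the first term is a multiple of $b(x)$ and the second is a multiple of $b(x)$ again by \eqref{form}, so $b(x)\mid h(x)\gcd(b(x),\ell(x)g(x))\pmod 2$. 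I do not expect a real obstacle here — the argument is a one-line Bézout manipulation in each case; the only thing to be careful about is the bookkeeping of the modulo-$2$ reductions, in particular making sure that $\frac{x^\beta-1}{f(x)}$ and $h(x)g(x)$ are correctly identified and that the gcd's are taken in $\mathbb{Z}_2[x]$ so that Bézout is available.
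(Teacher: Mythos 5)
Your proof is correct and follows essentially the same route as the paper, which states the corollary as an immediate consequence of the condition $b(x)\mid\frac{x^\beta-1}{f(x)}\ell(x)\pmod 2$ from the generator form; your Bézout manipulation in $\mathbb{Z}_2[x]$, together with the identification $\frac{x^\beta-1}{f(x)}\equiv h(x)g(x)\pmod 2$, simply makes explicit the details the paper leaves implicit.
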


Note that if a ${\mathbb{Z}_2 {\mathbb{Z}_4}}$-additive cyclic code is separable, then $\ell(x)=0$.

In the following, a polynomial $f(x)\in \mathbb{Z}_2[x]$ or $\mathbb{Z}_4[x]$ will be denoted simply by $f$ and the parameter $\beta$ will be an odd integer.

\begin{lemma}\label{generators_Cb}
Let ${\cal C}=\langle (b\mid { 0}), (\ell \mid  fh +2f) \rangle$ be a ${\mathbb{Z}_2 {\mathbb{Z}_4}}$-additive cyclic code. Then,
$${\cal C}_b=\langle (b\mid { 0}), (\ell g \mid  2fg), ({ 0}\mid  2fh) \rangle.$$
\end{lemma}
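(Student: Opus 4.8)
The plan is to show the two inclusions $\langle (b\mid 0),(\ell g\mid 2fg),(0\mid 2fh)\rangle\subseteq{\cal C}_b$ and ${\cal C}_b\subseteq\langle (b\mid 0),(\ell g\mid 2fg),(0\mid 2fh)\rangle$, recalling that ${\cal C}_b$ is by definition the subcode of all order-two (i.e.\ $2$-torsion) codewords of ${\cal C}$. For the first inclusion I would check that each of the three displayed generators lies in ${\cal C}$ and has order dividing $2$. That $(b\mid 0)\in{\cal C}$ is immediate. For $(\ell g\mid 2fg)$, note it equals $g\star(\ell\mid fh+2f)=(\ell g\bmod 2\mid fhg+2fg)$ and since $fhg=x^\beta-1\equiv 0$ we get $(\ell g\mid 2fg)\in{\cal C}$; multiplying by $2$ kills the $\mathbb{Z}_4$ part and the binary part is already $2$-torsion, so it has order at most $2$. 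For $(0\mid 2fh)$, observe $h\star(\ell\mid fh+2f)=(\ell h\bmod 2\mid fh^2+2fh)$, and then $f\star$ of that, or more directly $fh\star(\ell\mid fh+2f)$; a cleaner route is to write $2\cdot(\ell\mid fh+2f)=(0\mid 2fh)$ after checking $2\ell\equiv 0$, which is immediate since the first component is over $\mathbb{Z}_2$. Actually the simplest: $2\star(\ell\mid fh+2f)=(2\ell\bmod 2\mid 2fh+4f)=(0\mid 2fh)$, so $(0\mid 2fh)\in{\cal C}$ and clearly has order dividing $2$.

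For the reverse inclusion, let $\mathcal{D}=\langle (b\mid 0),(\ell g\mid 2fg),(0\mid 2fh)\rangle$; I must show every order-two codeword of ${\cal C}$ lies in $\mathcal{D}$. A general element of ${\cal C}$ has the form $p\star(b\mid 0)+q\star(\ell\mid fh+2f)$ for $p\in\mathbb{Z}_2[x]/(x^\alpha-1)$ and $q\in\mathbb{Z}_4[x]/(x^\beta-1)$, i.e.\ $(pb+q\ell\bmod 2\mid q(fh+2f))$. Such an element has order dividing $2$ iff $2q(fh+2f)=2qfh\equiv 0\pmod{x^\beta-1}$, i.e.\ $g\mid qfh$, i.e.\ (since $fhg=x^\beta-1$) $g\mid q$; write $q=q'g+q''$ won't be needed — rather, the condition is exactly $g\mid q$ in $\mathbb{Z}_4[x]/(x^\beta-1)$, equivalently $q=g\,q'$ for some $q'$, plus possibly a $2$-torsion summand. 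More carefully: $2qfh\equiv 0$ means $qfh$ lies in the ideal $(2,\,g)$ — hmm, this needs the structure of $\mathbb{Z}_4[x]/(x^\beta-1)$; since $\beta$ is odd, $x^\beta-1$ factors into pairwise coprime basic irreducibles, so $qfh\equiv 0\pmod 2$ forces $g\mid q\pmod 2$. I would therefore split $q$ using the standard $\mathbb{Z}_4$-module description: $q(fh+2f)$ has order $\le 2$ iff $q\in(g)+(2)$ inside $\mathbb{Z}_4[x]/(x^\beta-1)$ when acting on $fh+2f$, which after simplification gives $q(fh+2f)\in\{g q'(fh+2f),\ 2q''(fh+2f)\}$-combinations; using $g(fh+2f)\equiv 2fg$ and $2(fh+2f)\equiv 2fh$ one sees the $\mathbb{Z}_4$-component of any order-two element of ${\cal C}$ is a $\mathbb{Z}_2$-combination of $2fg$ and $2fh$.

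With the $\mathbb{Z}_4$-part pinned down, it remains to handle the binary first coordinate. Given an order-two codeword, subtract the appropriate $\mathbb{Z}_2$-combination of $(\ell g\mid 2fg)$ and $(0\mid 2fh)$ to kill the $\mathbb{Z}_4$ component; what is left is $(m\mid 0)$ for some $m\in\mathbb{Z}_2[x]/(x^\alpha-1)$, and $(m\mid 0)\in{\cal C}$. Here I invoke the known generators of ${\cal C}_X$ and the relation $b\mid \tfrac{x^\beta-1}{f}\gcd(b,\ell)$ from Corollary~\ref{bdiviXSgcd}: the elements $(m\mid 0)\in{\cal C}$ form the kernel of the projection to ${\cal C}_Y$, and I claim this kernel is generated by $(b\mid 0)$ and $(\ell g\mid 2fg)$'s "binary residue" — concretely, $(m\mid 0)\in{\cal C}$ forces $m\in(b)$, because any $p\star(b\mid 0)+q\star(\ell\mid fh+2f)$ with zero $\mathbb{Z}_4$-part must have $q(fh+2f)\equiv 0$, hence (again using $\beta$ odd and coprimality) $q\equiv 0\pmod{fh}$ and $q\equiv 0\pmod{g}$, forcing the binary part $pb+q\ell\bmod 2$ to be $pb+ (\text{multiple of }fhg/\!\cdot)\ell$; one then shows this lands in $(b)$ using that $b\mid \tfrac{x^\beta-1}{f}\ell\pmod 2$. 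Thus $(m\mid 0)=p'\star(b\mid 0)\in\mathcal D$, completing the inclusion.

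The main obstacle I anticipate is the last paragraph: precisely characterizing which $(m\mid 0)$ lie in ${\cal C}$ and showing $m\in(b)\pmod 2$. This is where the divisibility hypothesis $b\mid\frac{x^\beta-1}{f}\ell\pmod 2$ (and its Corollary~\ref{bdiviXSgcd}) does the real work, and one must be careful to track reductions mod $2$ correctly when multiplying the mixed generator by polynomials whose coefficients are genuinely in $\mathbb{Z}_4$. Everything else is a routine verification that the three proposed generators lie in ${\cal C}$ and are $2$-torsion, together with the elementary factorization of $x^\beta-1$ over $\mathbb{Z}_4$ for odd $\beta$.
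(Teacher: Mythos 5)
Your two-inclusion argument is sound and in fact supplies the detail that the paper's own proof omits (the paper essentially just asserts that the order-two codewords are generated by the three displayed elements), so in substance you are fleshing out the same idea. The forward inclusion via $g\star(\ell\mid fh+2f)=(\ell g\mid 2fg)$ and $2\star(\ell\mid fh+2f)=(0\mid 2fh)$ is exactly right, and your identification of the order-two condition on $(pb+q\ell\mid q(fh+2f))$ as ``$g$ divides $q$ modulo $2$'', hence $q=gq'+2q''$ and second coordinate $2fgq'+2fhq''$, is correct; it uses only that $x^\beta-1$ is squarefree mod $2$ because $\beta$ is odd, so the reductions of $f,g,h$ are pairwise coprime.

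Two remarks on your last paragraph. First, it contains a false intermediate claim: from $q(fh+2f)\equiv 0\pmod{x^\beta-1}$ you do not get $q\equiv 0\pmod{fh}$; by CRT the mod-$f$ component of $q$ is unconstrained, and what is true is that the mod-$g$ component vanishes and $h\mid q\pmod 2$, i.e.\ $gh=\frac{x^\beta-1}{f}$ divides $q$ mod $2$ --- which is all you actually need, since then $q\ell$ is a multiple of $\frac{x^\beta-1}{f}\ell$ and hence of $b$ mod $2$ by the standing hypothesis. Second, the whole paragraph is avoidable: if in the subtraction you use precisely the coefficients $q'$ and $q''$ coming from $q=gq'+2q''$, then the binary coordinate of $q'\star(\ell g\mid 2fg)$ is $q'\ell g\equiv q\ell\pmod 2$, so the leftover is exactly $(pb\mid 0)=p\star(b\mid 0)$, and no characterization of $\{(m\mid 0)\in{\cal C}\}$ nor the divisibility hypothesis on $b$ is needed. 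With either repair the proof is complete.
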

\begin{proof}
${\cal C}_b$ is the subcode of ${\cal C}$ which contains all codewords of order $2$. Since ${\cal C}= \langle (b\mid { 0}), (\ell \mid  fh +2f) \rangle$, then all codewords of order $2$ are generated by $\langle (b\mid { 0}), (\ell g \mid  2fg), ({ 0}\mid  2fh) \rangle.$
\end{proof}

The following results shows the close relation of the parameters of the type 
of a $\mathbb{Z}_2\mathbb{Z}_4$-additive cyclic code and the degrees of the 
generator polynomials of the code.

First, the next theorem gives the spanning sets in terms of the generator 
polynomials. 

\begin{theorem}{\cite[Theorem 13]{Abu}}\label{SpanningSetTh}
Let ${\cal C}=\langle (b\mid{ 0}), (\ell \mid fh +2f) \rangle$ be a 
\mbox{${\mathbb{Z}_2 {\mathbb{Z}_4}}$-additive} cyclic code of type $(\alpha, 
\beta; \gamma , \delta; \kappa)$, where $fhg = x^\beta -1$. Let

$$S_1=\bigcup^{\alpha-\deg(b)-1}_{i=0} \{x^i\star(b\mid 0)\},\quad
S_2=\bigcup^{\deg(g)-1}_{i=0}\{x^i\star(\ell \mid fh+2f)\} $$
and
$$S_3= \bigcup^{\deg(h)-1}_{i=0} \{x^i\star(\ell g\mid 2fg)\}.$$
Then, $S_1\cup S_2\cup S_3$ forms a minimal spanning set for ${\cal C}$ as a 
$\mathbb{Z}_4$-module. Moreover, $\cal C$ has 
$2^{\alpha-\deg(b)}4^{\deg{(g)}}2^{\deg{(h)}}$ codewords.
\end{theorem}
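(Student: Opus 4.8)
The plan is to show that $S_1 \cup S_2 \cup S_3$ spans $\mathcal{C}$, that it is minimal, and then count. For the spanning part, every codeword of $\mathcal{C}$ is a $\mathbb{Z}_4[x]$-combination $\lambda_1(x) \star (b \mid 0) + \lambda_2(x) \star (\ell \mid fh + 2f)$. First I would reduce $\lambda_1(x) \star (b \mid 0)$: working modulo $2$ in the first coordinate, since $b \mid x^\alpha - 1$, ordinary division of $\lambda_1$ by $(x^\alpha - 1)/b$ lets us replace $\lambda_1$ by a polynomial of degree $< \alpha - \deg(b)$, so this contribution lies in the span of $S_1$. Next, for the term $\lambda_2(x) \star (\ell \mid fh + 2f)$, I would divide $\lambda_2$ by $g$ in $\mathbb{Z}_4[x]$, writing $\lambda_2 = qg + r$ with $\deg(r) < \deg(g)$; the remainder term $r \star (\ell \mid fh + 2f)$ lies in the span of $S_2$, and it remains to handle $q g \star (\ell \mid fh + 2f)$. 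Using $fhg = x^\beta - 1$, we get $qg \cdot (fh + 2f) = q(x^\beta - 1)h/ ... $ — more precisely $g(fh+2f) = gfh + 2gf \equiv 2gf \pmod{x^\beta - 1}$ since $gfh = x^\beta - 1$; so in the $\mathbb{Z}_4$-coordinate $qg(fh+2f) \equiv 2qgf$, and in the binary coordinate $qg\ell \bmod 2$. Thus $qg \star (\ell \mid fh+2f) = q \star (\ell g \mid 2fg)$, and after one more division of $q$ by $h$ modulo the appropriate relation one reduces to $\deg < \deg(h)$, landing in the span of $S_3$ (the leftover multiple of $h$ kills the $\mathbb{Z}_4$-part because $2fgh = 2(x^\beta-1) \equiv 0$, and the binary part $h g \ell \bmod 2$ needs the divisibility condition $b \mid \frac{x^\beta-1}{f}\ell \pmod 2$ from Corollary~\ref{bdiviXSgcd}, which feeds it back into the span of $S_1$). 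This establishes that $S_1 \cup S_2 \cup S_3$ is a spanning set.

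For minimality, I would argue that no proper subset spans, equivalently that the $2^{\alpha-\deg(b)}4^{\deg(g)}2^{\deg(h)}$ distinct $\mathbb{Z}_4$-combinations of these generators (with the first block of $\alpha - \deg(b)$ generators and the last block of $\deg(h)$ generators taken only with coefficients in $\{0,2\}$, since they have order $2$, and the middle $\deg(g)$ generators with coefficients in $\mathbb{Z}_4$) are all distinct codewords. This is the counting step: an element $\sum a_i (x^i \star (b\mid 0)) + \sum b_i (x^i \star (\ell \mid fh+2f)) + \sum c_i (x^i \star (\ell g \mid 2fg))$ equalling zero forces, by looking at the $\mathbb{Z}_4$-coordinate and the fact that $fh + 2f$ has order $4$ while its $g$-multiples vanish, that $\sum b_i x^i$ is divisible by $g$; comparing degrees gives $b_i = 0$. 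Then the order-$2$ generators: $2fg$ has order $2$ and is annihilated by $h$, and $\sum c_i x^i$ has degree $< \deg(h)$, so $c_i = 0$; finally the binary coordinate forces $a_i = 0$. So the count of codewords is exactly the product, and the spanning set has the stated cardinality $(\alpha - \deg(b)) + \deg(g) + \deg(h) = \alpha - \deg(b) + \deg(g) + \deg(h)$, which cannot be reduced.

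The main obstacle I expect is the bookkeeping in the reduction of $qg \star (\ell \mid fh + 2f)$ down into $S_3$: one must carefully track the binary coordinate $q g \ell \bmod 2$ and invoke exactly the right divisibility hypothesis ($b \mid h\gcd(b,\ell g) \pmod 2$, or the $\frac{x^\beta-1}{f}\ell$ version) so that the residual binary polynomial, once its degree exceeds $\deg(h) + (\text{something})$, gets absorbed back into the $S_1$-span rather than producing a genuinely new generator. Getting the degree bookkeeping consistent with $\deg(\ell) < \deg(b)$ and with $b \mid x^\alpha - 1$ is the delicate point; the $\mathbb{Z}_4$-coordinate manipulations, by contrast, are routine once the relations $fhg = x^\beta - 1$ and "multiplication by $2$ kills $h$-multiples of $fg$" are in hand.
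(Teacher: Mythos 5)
Your proposal is correct, but note that this paper does not prove the statement at all: it is imported verbatim as \cite[Theorem 13]{Abu}, so there is no internal proof to compare with; what you have written is essentially the standard argument one finds for that theorem. Your spanning step is sound: reduce $\lambda_1$ modulo $(x^\alpha-1)/b$ over $\mathbb{Z}_2$, divide $\lambda_2$ by the monic $g$ and use $g\star(\ell\mid fh+2f)=(\ell g\mid 2fg)$ since $fgh=x^\beta-1$, then divide by $h$ and absorb the leftover $(q'h\ell g\bmod 2\mid 0)$ into the span of $S_1$ via the divisibility $b\mid \frac{x^\beta-1}{f}\ell \pmod 2$ — which, strictly speaking, is part of the standard form (\ref{form}) of the code rather than Corollary~\ref{bdiviXSgcd}, though either suffices. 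Two small imprecisions worth tightening: first, in the independence step the deduction ``the $\mathbb{Z}_4$-coordinate forces $g\mid\sum b_ix^i$'' is not a one-shot divisibility over $\mathbb{Z}_4$; one should reduce mod $2$ to get $\bar g\mid\bar p$ (using that $x^\beta-1$ is squarefree over $\mathbb{Z}_2$ because $\beta$ is odd, so $\bar f,\bar g,\bar h$ are pairwise coprime), conclude $p=2p'$, and then use that $2A\equiv 0\pmod{x^\beta-1}$ over $\mathbb{Z}_4$ means $x^\beta-1$ divides $A$ mod $2$, which kills $p'$ and then the $c_i$; second, for minimality it is worth saying explicitly that since all $2^{|S_1|}4^{|S_2|}2^{|S_3|}$ combinations are distinct, deleting any generator would bound the span by a strictly smaller product of orders, so no proper subset can span. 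With those two clarifications your proof is complete.
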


Note that $S_2$ generates all order $4$ codewords and the subcode of codewords 
of order 2,  ${\cal C}_b$, is generated by $\{S_1, 2S_2, S_3\}$. Hence, in
the following theorem, by using these spanning sets, we can obtain the 
parameters $(\alpha, \beta; \gamma , \delta; \kappa)$ of the code.

\begin{theorem}\label{TypeDependingDeg}
Let ${\cal C}= \langle (b\mid { 0}), (\ell \mid  fh +2f) \rangle $ be a ${\mathbb{Z}_2 {\mathbb{Z}_4}}$-additive cyclic code of type $(\alpha, \beta; \gamma , \delta; \kappa)$, where $fhg = x^\beta -1.$
Then 
\begin{align*}
\gamma &= \alpha -\deg(b)+\deg(h),\\
\delta &= \deg(g),\\
\kappa &= \alpha -\deg(\gcd(\ell g, b)).
\end{align*}
\end{theorem}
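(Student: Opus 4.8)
The plan is to read off each parameter from the minimal spanning set $S_1\cup S_2\cup S_3$ provided by Theorem~\ref{SpanningSetTh}, together with the generator description of ${\cal C}_b$ from Lemma~\ref{generators_Cb}. Recall that $|{\cal C}|=2^{\gamma+2\delta}$, that $S_2$ contributes the order-$4$ generators and $S_1\cup 2S_2\cup S_3$ spans ${\cal C}_b$. Since $|S_1|=\alpha-\deg(b)$, $|S_2|=\deg(g)$, $|S_3|=\deg(h)$ and the spanning set is minimal, counting codewords gives $2^{\gamma+2\delta}=2^{\alpha-\deg(b)}4^{\deg(g)}2^{\deg(h)}=2^{\alpha-\deg(b)+\deg(h)+2\deg(g)}$. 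Comparing exponents of the order-$4$ part immediately yields $\delta=\deg(g)$, and then $\gamma=\alpha-\deg(b)+\deg(h)$.

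For a cleaner (non-counting) argument I would instead observe that each element of $S_2$ has order $4$ in the module because its $\mathbb{Z}_4$-component $x^i(fh+2f)$ has order $4$ (as $fh\ne 0 \bmod 2$ for $i$ in the stated range, since $\deg(x^ifh)<\beta$), and the $\delta$ "free" coordinates of the standard generator matrix ${\cal G}_{\cal C}$ correspond exactly to the $\deg(g)$ shifts in $S_2$; hence $\delta=\deg(g)$. Then the order-$2$ part of ${\cal C}$ has rank $\gamma$ over $\mathbb{Z}_2$, and ${\cal C}_b/2{\cal C}$ together with the image of $S_1,S_3$ accounts for it: the generators $(b\mid 0)$ (contributing $\alpha-\deg(b)$ independent shifts) and $(\ell g\mid 2fg)$ (contributing $\deg(h)$ independent shifts, since modulo the previous rows these behave like shifts of $2fg$, which has order $2$), so $\gamma=(\alpha-\deg(b))+\deg(h)$. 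Here one must check that the indicated shifts are genuinely $\mathbb{Z}_2$-linearly independent modulo the rest, which is precisely the minimality assertion of Theorem~\ref{SpanningSetTh}.

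The parameter $\kappa$ requires more care. By definition $\kappa=\dim_{\mathbb{Z}_2}({\cal C}_b)_X$, the dimension of the binary code obtained by projecting ${\cal C}_b$ onto the first $\alpha$ coordinates. By Lemma~\ref{generators_Cb}, $({\cal C}_b)_X$ is the binary cyclic code generated by $b$ and $\ell g$, i.e.\ generated by $\gcd(b,\ell g) \pmod 2$. A binary cyclic code of length $\alpha$ generated by a divisor $d(x)$ of $x^\alpha-1$ has dimension $\alpha-\deg(d)$; applying this with $d=\gcd(\ell g,b)$ gives $\kappa=\alpha-\deg(\gcd(\ell g,b))$.

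The main obstacle is the $\kappa$ computation: one has to be sure that $({\cal C}_b)_X=\langle b,\ell g\rangle=\langle\gcd(b,\ell g)\rangle$ as an ideal of $\mathbb{Z}_2[x]/(x^\alpha-1)$, and that this gcd divides $x^\alpha-1$ so that the dimension formula applies. The inclusion $\langle\gcd(b,\ell g)\rangle\supseteq\langle b,\ell g\rangle$ is the Bézout identity, and $\supseteq$ in the other direction is clear; one also needs $({\cal C}_b)_X=\langle b,\ell g\rangle$ rather than something larger, which follows because the third generator $(0\mid 2fh)$ of ${\cal C}_b$ in Lemma~\ref{generators_Cb} projects to $0$ on $X$, and because the $2S_2$ part projects to the $\ell g$-shifts already present (note $2\cdot(\ell\mid fh+2f)$ has $X$-part $2\ell\equiv 0\bmod 2$, so in fact one should work directly from the $S_1,S_3$ description: $S_1$ projects to shifts of $b$ and $S_3$ to shifts of $\ell g$). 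I would state this projection step explicitly and then invoke the standard fact (see \cite{macwilliams}) that the gcd of generators of a binary cyclic code, taken modulo $x^\alpha-1$, is the standard generator polynomial, whose degree is $\alpha$ minus the dimension.
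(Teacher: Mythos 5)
Your proposal is correct and follows essentially the same route as the paper: $\gamma$ and $\delta$ are read off from the minimal spanning set of Theorem~\ref{SpanningSetTh} (with the order-4 generators coming from $S_2$), and $\kappa$ is obtained from Lemma~\ref{generators_Cb} by observing that $({\cal C}_b)_X$ is the binary cyclic code generated by $b$ and $\ell g$, hence by $\gcd(b,\ell g)$, giving dimension $\alpha-\deg(\gcd(\ell g,b))$. Your extra care about the projection of $(0\mid 2fh)$ and $2S_2$, and about $\gcd(b,\ell g)$ dividing $x^\alpha-1$, only makes explicit what the paper leaves implicit.
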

\begin{proof}
The parameters $\gamma$ and $\delta$ are known from Theorem~\ref{SpanningSetTh} 
and the parameter $\kappa$ is the dimension of $({\cal C}_b)_X$. By 
Lemma~\ref{generators_Cb}, the space $({\cal C}_b)_X$ is generated by the 
polynomials $b$ and $\ell g$.  Since the ring is a polynomial ring and thus a 
principal ideal ring, it is generated by the greatest common divisor of the 
two polynomials.  Then, $\kappa = \alpha -deg(gcd(\ell g,b)).$
\end{proof} 

In this case we have that $|{\cal C}| = 2^{\alpha - deg(b)} 4^{deg(g) } 2^{deg(h)}.$

\begin{proposition}\label{k1k2d1d2}
Let ${\cal C}=\langle (b\mid{ 0}), (\ell \mid fh +2f) \rangle$ be a \mbox{${\mathbb{Z}_2 {\mathbb{Z}_4}}$-additive} cyclic code of type $(\alpha, \beta; \gamma , \delta=\delta_1+\delta_2; \kappa=\kappa_1+\kappa_2)$, where $fhg = x^\beta -1$. Then,
$$\kappa_1= \alpha -\deg(b),\quad \kappa_2= \deg(b)-\deg(\gcd(b, \ell g)),$$
$$\delta_1= \deg(\gcd(b,\ell g)) - \deg(\gcd(b,\ell)) \mbox{ and } \delta_2=\deg(g)-\delta_1.$$
\end{proposition}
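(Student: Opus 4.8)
The plan is to identify $\kappa_1$, $\kappa_2$, $\delta_1$ and $\delta_2$ directly with the dimensions (or $\mathbb{Z}_4$-ranks) of the relevant subcodes, using the spanning sets already established. Recall that $\kappa_1$ is the dimension of the binary code $\{u : (u\mid 0)\in{\cal C}\}$, i.e.\ the projection onto $X$ of the subcode of ${\cal C}$ of vectors vanishing on $Y$. By Theorem~\ref{SpanningSetTh}, $S_2$ generates all codewords of ${\cal C}$ that project onto a nonzero element of ${\cal C}_Y$ together with order-$4$ elements; more precisely, a codeword $(p\mid q)\in{\cal C}$ has $q=0$ exactly when it lies in $\langle (b\mid 0), (\ell g\mid 2fg), (0\mid 2fh)\rangle$ with the $\mathbb{Z}_4[x]$-coefficient of $(\ell g\mid 2fg)$ making its $Y$-part vanish, which forces that coefficient to be a multiple of $h$ modulo $2$; a short argument then shows that the $X$-vanishing-on-$Y$ subcode is $\langle (b\mid 0)\rangle$ as a $\mathbb{Z}_2[x]$-module. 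Since $b\mid(x^\alpha-1)$, its dimension as a binary cyclic code is $\alpha-\deg(b)$, giving $\kappa_1=\alpha-\deg(b)$.

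Next, $\kappa=\kappa_1+\kappa_2$ was computed in Theorem~\ref{TypeDependingDeg} as $\alpha-\deg(\gcd(\ell g,b))$, so $\kappa_2=\kappa-\kappa_1 = \deg(b)-\deg(\gcd(b,\ell g))$ is immediate by subtraction. For $\delta_1$ and $\delta_2$: by definition $\delta_2$ is the $\mathbb{Z}_4$-dimension of the subcode $\{(0\mid q)\in{\cal C} : \text{ord}(q)=4\}$. Intersecting ${\cal C}$ with the coordinates vanishing on $X$ and looking at order-$4$ elements, one sees from the spanning set that such codewords come from $\mathbb{Z}_4[x]$-multiples $\lambda\star(\ell\mid fh+2f)$ whose $X$-part $\lambda\ell$ vanishes modulo $2$, i.e.\ $\lambda$ is a multiple of $(x^\alpha-1)/\gcd(x^\alpha-1,\ell) \pmod 2$ — but one must also allow correcting the $X$-part by adding multiples of $(b\mid 0)$, so the real condition is that $\lambda\ell \equiv 0 \pmod{\gcd(x^\alpha-1,b)}$, and since $b\mid(x^\alpha-1)$ this says $b\mid\lambda\ell\pmod 2$, equivalently $\lambda$ is a multiple of $b/\gcd(b,\ell)\pmod 2$. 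Counting: the number of such $\lambda$ modulo $g$ (the $\mathbb{Z}_4$-period of the order-$4$ generator being $\deg g$) gives $\delta_2 = \deg(g) - \deg(b/\gcd(b,\ell)\bmod\gcd(g,\dots))$; cleaning this up with the divisibility constraints from Corollary~\ref{bdiviXSgcd} should yield $\delta_2=\deg(g)-\delta_1$ with $\delta_1=\deg(\gcd(b,\ell g))-\deg(\gcd(b,\ell))$. Finally $\delta_1=\delta-\delta_2$ follows, and $\delta_1 = \deg(\gcd(b,\ell g))-\deg(\gcd(b,\ell))$ is exactly the stated formula.

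I would organize the write-up as: (1) prove $\kappa_1=\alpha-\deg(b)$ by identifying the $X$-subcode; (2) get $\kappa_2$ by subtracting from Theorem~\ref{TypeDependingDeg}; (3) prove $\delta_2=\deg(g)-\delta_1$ by identifying the order-$4$ $Y$-subcode, equivalently compute $\delta_1$ directly and use $\delta_1=\delta-\delta_2=\deg(g)-\delta_2$; (4) verify the closed form $\delta_1=\deg(\gcd(b,\ell g))-\deg(\gcd(b,\ell))$. A clean way to handle (3)–(4) is to note that $\delta_1$ is the $\mathbb{Z}_4$-rank of the quotient measuring how much of the order-$4$ part is "mixed" between $X$ and $Y$, and that this mixing is governed precisely by how $\gcd(b,\ell)$ sits inside $\gcd(b,\ell g)$ — the gap $\gcd(b,\ell g)/\gcd(b,\ell)$ captures the $\delta_1$ generators of the form $(S_{\delta_1}\text{-row})$ in ${\cal G}_{\cal C}$ whose $X$-part is nonzero.

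The main obstacle I anticipate is step (3): carefully pinning down the order-$4$ subcode with all coordinates outside $X$ equal to zero, because one must simultaneously (a) force the binary $X$-part of a $\mathbb{Z}_4[x]$-multiple of $(\ell\mid fh+2f)$ to vanish, using that we may freely add multiples of $(b\mid 0)$, and (b) ensure the resulting $Y$-part still has order $4$ rather than dropping to order $2$. Tracking these two constraints together, and translating the resulting coset count into the degree difference $\deg(\gcd(b,\ell g))-\deg(\gcd(b,\ell))$ rather than into some less symmetric expression, is the delicate part; the divisibility hypotheses collected in Corollary~\ref{bdiviXSgcd} are exactly what is needed to make the bookkeeping collapse to the stated formula, and I would lean on them heavily there.
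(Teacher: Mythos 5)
Your steps (1) and (2) are fine, and in fact (1) is a genuinely different route from the paper: you identify the subcode $\{u : (u\mid 0)\in{\cal C}\}$ structurally, noting that such a codeword lies in ${\cal C}_b=\langle (b\mid 0),(\ell g\mid 2fg),(0\mid 2fh)\rangle$, that vanishing of the $Y$-part forces the coefficient of $(\ell g\mid 2fg)$ to be a multiple of $h$ modulo $2$, and that the divisibility $b\mid \frac{x^\beta-1}{f}\ell\pmod 2$ then puts the $X$-part inside $\langle b\rangle$; hence $\kappa_1=\alpha-\deg(b)$, and $\kappa_2=\kappa-\kappa_1$ follows from Theorem~\ref{TypeDependingDeg}. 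The paper instead gets everything by counting: Proposition~\ref{Dimension_subcodes} gives $|{\cal C}_Y|=2^{\gamma-\kappa_1}4^{\delta}$ and $|{\cal C}_X|=2^{\kappa+\delta_1}$, while ${\cal C}_Y=\langle fh+2f\rangle$ has $2^{\deg(h)}4^{\deg(g)}$ codewords and ${\cal C}_X$, $({\cal C}_b)_X$ are the binary cyclic codes generated by $\gcd(b,\ell)$ and $\gcd(b,\ell g)$.

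The genuine gap is in your steps (3)--(4): $\delta_1$ and $\delta_2$ are never actually derived. The expression ``$\delta_2=\deg(g)-\deg(b/\gcd(b,\ell)\bmod\gcd(g,\dots))$'' is not a computation, and ``cleaning this up \dots should yield'' is precisely the statement to be proved. Your direct approach can be completed, but it needs the following argument, which is absent: write $d=\gcd(b,\ell)$, $b=dc$, $\ell=de$ with $\gcd(c,e)=1$; the $X$-part of $\lambda\star(\ell\mid fh+2f)$ can be cancelled by a multiple of $(b\mid 0)$ iff $b\mid\lambda\ell\pmod 2$, i.e.\ iff $c\mid\lambda\pmod 2$, so $\{u':(0\mid u')\in{\cal C}\}=\langle c(fh+2f),\,2fh\rangle$; since $b\mid \ell g h\pmod 2$ and $\gcd(c,e)=1$ give $c\mid gh\pmod 2$, the $4$-rank of this quaternary cyclic code is $\deg(g)-\deg(\gcd(c,g))$, and $\gcd(b,\ell g)=d\gcd(c,g)$ turns this into $\deg(g)-\bigl(\deg(\gcd(b,\ell g))-\deg(\gcd(b,\ell))\bigr)$, i.e.\ the claimed $\delta_2$, with $\delta_1=\delta-\delta_2$. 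Alternatively, and much more cheaply (this is the paper's one-line proof), skip the order-$4$ bookkeeping altogether: $|{\cal C}_X|=2^{\kappa+\delta_1}$ by Proposition~\ref{Dimension_subcodes} and ${\cal C}_X=\langle\gcd(b,\ell)\rangle$, so $\kappa+\delta_1=\alpha-\deg(\gcd(b,\ell))$; subtracting $\kappa=\alpha-\deg(\gcd(b,\ell g))$ from Theorem~\ref{TypeDependingDeg} gives $\delta_1=\deg(\gcd(b,\ell g))-\deg(\gcd(b,\ell))$ immediately, and then $\delta_2=\delta-\delta_1=\deg(g)-\delta_1$. As written, your proposal proves the $\kappa_1,\kappa_2$ formulas but leaves the $\delta_1,\delta_2$ formulas unestablished.
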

\begin{proof}
The result follows from Proposition \ref{Dimension_subcodes} and knowing the generator polynomials of ${\cal C}_X$ and $({\cal C}_b)_X$. They are $\gcd(b,\ell)$ and $\gcd(b, \ell g)$, respectively.
\end{proof}

\subsection{Dual ${\mathbb{Z}_2 {\mathbb{Z}_4}}$-Additive Cyclic Codes}
In \cite{Abu}, it is proven that the dual code of a $\mathbb{Z}_2\mathbb{Z}_4$-additive cyclic code is also a $\mathbb{Z}_2\mathbb{Z}_4$-additive cyclic code. So, we will denote
$${\cal C}^\perp = \langle (\bar{b}\mid{ 0}), (\bar{\ell} \mid \bar{f}\bar{h} +2\bar{f}) \rangle,$$
where $\bar{f}\bar{h}\bar{g} = x^\beta -1$ in ${\mathbb{Z}}_4[x]$, $\bar{b}, \bar{\ell}\in\mathbb{Z}_2[x]/(x^\alpha-1)$ with $\bar{b}|(x^\alpha-1)$, $deg(\bar{\ell}) < deg(\bar{b})$ and $\bar{b}$ divides $\frac{x^\beta -1}{\bar{f}} \bar{\ell} \pmod{2}.$

The \textit{reciprocal polynomial} of a polynomial $p(x)$ is $x^{\deg(p(x))}p(x^{-1})$ and is denoted by $p^*(x)$. As in the theory of cyclic codes over $\mathbb{Z}_2$ and $\mathbb{Z}_4$ (see \cite{macwilliams}, \cite{PQ}), reciprocal polynomials have an important role on duality.

We denote the polynomial $\sum^{m-1}_{i=0} x^i$ by $\theta_m(x)$. Using this notation we have the following proposition.

\begin{proposition}\label{xnm=xntheta}
Let $n,m\in \mathbb{N}$. Then, $$x^{nm}-1= (x^n-1)\theta_m(x^n).$$
\end{proposition}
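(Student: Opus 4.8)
The plan is to prove the identity $x^{nm}-1 = (x^n-1)\theta_m(x^n)$ by a direct algebraic manipulation, recognizing it as a disguised instance of the familiar geometric-series factorization $y^m - 1 = (y-1)(1 + y + \cdots + y^{m-1})$.

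First I would substitute $y = x^n$. With this substitution the right-hand side becomes $(y-1)\sum_{i=0}^{m-1} y^i$, since by definition $\theta_m(x) = \sum_{i=0}^{m-1} x^i$, so $\theta_m(x^n) = \theta_m(y) = \sum_{i=0}^{m-1} y^i$. Then I would expand the product by distributing: $(y-1)\sum_{i=0}^{m-1} y^i = \sum_{i=0}^{m-1} y^{i+1} - \sum_{i=0}^{m-1} y^i = \sum_{i=1}^{m} y^i - \sum_{i=0}^{m-1} y^i$. The two sums telescope, leaving only $y^m - y^0 = y^m - 1$. Finally, undoing the substitution $y = x^n$ gives $y^m - 1 = x^{nm} - 1$, which is exactly the left-hand side. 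This computation is valid over $\mathbb{Z}$, and hence over both $\mathbb{Z}_2[x]$ and $\mathbb{Z}_4[x]$, which is the setting where the proposition will be applied.

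There is essentially no obstacle here; the only point requiring a moment's care is the telescoping step, i.e. making sure the index shift in $\sum_{i=0}^{m-1} y^{i+1} = \sum_{j=1}^{m} y^j$ is handled correctly so that all intermediate terms $y^1, \ldots, y^{m-1}$ cancel and only the extreme terms $y^m$ and $-1$ survive. Since the statement is elementary and purely formal, I would keep the proof to two or three lines, presenting the chain of equalities
\begin{equation*}
(x^n-1)\theta_m(x^n) = (x^n-1)\sum_{i=0}^{m-1} (x^n)^i = \sum_{i=1}^{m} (x^n)^i - \sum_{i=0}^{m-1} (x^n)^i = (x^n)^m - 1 = x^{nm}-1,
\end{equation*}
and noting that this holds in any commutative ring, in particular in the polynomial rings over $\mathbb{Z}_2$ and $\mathbb{Z}_4$ used throughout the paper.
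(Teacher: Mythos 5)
Your proof is correct and follows essentially the same route as the paper: both reduce the identity to the standard factorization $y^m-1=(y-1)\theta_m(y)$ via the substitution $y=x^n$, with your version merely spelling out the telescoping verification that the paper cites as well known.
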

\begin{proof}
It is well know that $y^m-1=(y-1)\theta_m(y)$, replacing $y$ by $x^n$ the result follows.
\end{proof}
From now on, $\mathfrak{m}$ denotes the least common multiple \mbox{of $\alpha$ and $\beta$.}
\begin{definition}
Let $\textbf{u}(x)=(u(x)\mid u'(x))$ and $\textbf{v}(x)=(v(x)\mid v'(x))$ be elements in $R_{\alpha,\beta}$. We define the map
$$\circ:R_{\alpha,\beta}\times R_{\alpha,\beta}\longrightarrow \mathbb{Z}_4[x]/(x^{\mathfrak{m}}-1),$$
such that
\begin{align*}
\circ(\textbf{u}&(x),\textbf{v}(x))= 2u(x)\theta_{\frac{\mathfrak{m}}{\alpha}}(x^\alpha)x^{\mathfrak{m}-1-\deg(v(x))}v^*(x) +\\
&+  u'(x)\theta_\frac{\mathfrak{m}}{\beta}(x^\beta)x^{\mathfrak{m}-1-\deg(v'(x))}{v'}^*(x) \!\!\mod(x^{\mathfrak{m}}-1),
\end{align*}
where the computations are made taking the binary zeros and ones in $u(x)$ and $v(x)$ as quaternary zeros and ones, respectively.
\end{definition}
The map $\circ$ is linear in each of its arguments; i.e., if we fix the first entry of the map invariant, while letting the second entry vary, then the result is a linear map. Similarly, when fixing the second entry invariant. Then, the map $\circ$ is a bilinear map between $\mathbb{Z}_4[x]$-modules.

From now on, we denote $\circ(\textbf{u}(x), \textbf{v}(x))$ by $\textbf{u}(x)\circ\textbf{v}(x)$. Note that $\textbf{u}(x)\circ\textbf{v}(x)$ belongs to $\mathbb{Z}_4[x]/(x^{\mathfrak{m}}-1)$.

\begin{proposition}
Let $\textbf{u}$ and $\textbf{v}$ be vectors in $\mathbb{Z}_2^{\alpha}\times\mathbb{Z}_4^{\beta}$ with associated polynomials $\textbf{u}(x)=(u(x)\mid u'(x))$ and $\textbf{v}(x)=(v(x)\mid v'(x))$. Then, $\textbf{u}$ is orthogonal to $\textbf{v}$ and all its shifts if and only if $$\textbf{u}(x)\circ\textbf{v}(x)= 0.$$ 
\end{proposition}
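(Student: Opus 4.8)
The plan is to unwind the definition of the map $\circ$ and relate each of its two summands to the inner products $\mathbf{u}^{(i)}\cdot\mathbf{v}$ for all shifts $i$. First I would fix notation: write $\mathbf{u}(x)\circ\mathbf{v}(x)=\sum_{k=0}^{\mathfrak m-1} c_k x^k \bmod (x^{\mathfrak m}-1)$ and compute the coefficient $c_k$ explicitly. The key observation is that for a polynomial $p(x)$ of degree $<\alpha$, the product $x^{\mathfrak m-1-\deg v(x)}v^*(x)$ has, after reduction mod $x^{\mathfrak m}-1$, the effect of reversing $v$ and placing it so that multiplying by $u(x)$ and reading off a coefficient produces a correlation $\sum_j u_j v_{j-k}$ with indices read modulo $\alpha$; the factor $\theta_{\mathfrak m/\alpha}(x^\alpha)$ is exactly what is needed (via Proposition~\ref{xnm=xntheta}) to turn the reduction modulo $x^\alpha-1$ into a reduction modulo $x^{\mathfrak m}-1$, i.e.\ it ``periodizes'' the length-$\alpha$ correlation to length $\mathfrak m$. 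So I would prove the lemma-level identity that the $\mathbb Z_2$-part of the coefficient $c_k$ equals $2\sum_{j=0}^{\alpha-1} u_j v_{j+k}$ with subscripts mod $\alpha$, and similarly the $\mathbb Z_4$-part equals $\sum_{j=0}^{\beta-1} u'_j v'_{j+k}$ with subscripts mod $\beta$.

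Granting that coefficient computation, the rest is immediate: by the formula for the standard inner product given in the excerpt,
\[
c_k \;=\; 2\sum_{j=0}^{\alpha-1}u_j v_{j+k} + \sum_{j=0}^{\beta-1}u'_j v'_{j+k} \;=\; \mathbf{u}\cdot\mathbf{v}^{(k)} \;=\; \mathbf{u}^{(-k)}\cdot\mathbf{v},
\]
where the last equality is the standard fact that cyclically shifting one argument of the inner product by $k$ is the same as shifting the other by $-k$ (since the sums run over full periods). Hence $\mathbf{u}(x)\circ\mathbf{v}(x)=0$ if and only if $c_k=0$ for all $k=0,\dots,\mathfrak m-1$, if and only if $\mathbf{u}^{(i)}\cdot\mathbf{v}=0$ for all $i$, which by bilinearity (and the fact that shifts of $\mathbf v$ span the same cyclic structure) is equivalent to $\mathbf{u}$ being orthogonal to $\mathbf{v}$ and all its cyclic shifts. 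One should note that the index $k$ ranges over a complete set of residues both mod $\alpha$ and mod $\beta$ as it runs $0,\dots,\mathfrak m-1$, so no shift is missed; this is where $\mathfrak m=\operatorname{lcm}(\alpha,\beta)$ enters.

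The main obstacle I anticipate is the bookkeeping in the coefficient extraction: one must carefully track how $x^{\mathfrak m-1-\deg v(x)}v^*(x)\bmod(x^{\mathfrak m}-1)$ interacts with multiplication by $u(x)$ and with the periodizing factor $\theta_{\mathfrak m/\alpha}(x^\alpha)$, making sure the degree bounds $\deg b>\deg\ell$ type hypotheses (here $\deg v<\alpha$, $\deg v'<\beta$) are what guarantee $v^*$ really is the coefficient-reversal of $v$ and that no wrap-around corrupts the correlation. It is cleanest to first establish the identity for the pure binary part and the pure quaternary part separately (taking $\beta=0$, resp.\ $\alpha=0$, as formal intermediate steps, or just handling the two summands independently), using $x^{\mathfrak m}-1=(x^\alpha-1)\theta_{\mathfrak m/\alpha}(x^\alpha)$ to reduce a length-$\mathfrak m$ computation to a length-$\alpha$ one, and then add. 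Everything else—linearity of $\circ$, the equivalence between "$c_k=0$ for all $k$" and vanishing in $\mathbb Z_4[x]/(x^{\mathfrak m}-1)$—is routine.
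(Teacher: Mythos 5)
Your proposal is correct and takes essentially the same route as the paper: expand $\mathbf{u}(x)\circ\mathbf{v}(x)$ coefficient-wise, use $x^{\mathfrak{m}}-1=(x^\alpha-1)\theta_{\mathfrak{m}/\alpha}(x^\alpha)$ to periodize the length-$\alpha$ and length-$\beta$ correlations to length $\mathfrak{m}$, identify each coefficient with $\mathbf{u}\cdot\mathbf{v}^{(i)}$, and conclude that the product vanishes iff $\mathbf{u}$ is orthogonal to all shifts of $\mathbf{v}$. The only bookkeeping slip is the index direction: the coefficient of $x^k$ is $\mathbf{u}\cdot\mathbf{v}^{(\mathfrak{m}-1-k)}$ rather than $\mathbf{u}\cdot\mathbf{v}^{(k)}$ (the paper writes the product as $\sum_i S_i x^{\mathfrak{m}-1-i}$), which does not affect the equivalence.
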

\begin{proof}
The $i$th shift of $\textbf{v}$ is $\textbf{v}^{(i)}=(v_{0+i}v_{1+i}\ldots v_{\alpha-1+i}\mid v'_{0+i}\ldots v'_{\beta-1+i})$. Then, 

$$\textbf{u}\cdot\textbf{v}^{(i)}=0\mbox{ if and only if }2\sum^{\alpha-1}_{j=0} u_jv_{j+i} +\sum^{\beta-1}_{k=0} u'_kv'_{k+i}=0.$$
Let $S_i=2\sum^{\alpha-1}_{j=0} u_jv_{j+i} +\sum^{\beta-1}_{k=0} u'_kv'_{k+i}$. One can check that  

\begin{align*}
\textbf{u}(x)\circ\textbf{v}(x)&=\sum^{\alpha-1}_{n=0}\left[ 2\theta_\frac{\mathfrak{m}}{\alpha}(x^\alpha)\sum^{\alpha-1}_{j=0}u_jv_{j+n}x^{\mathfrak{m}-1-n}\right] +\\ 
& \sum^{\beta-1}_{t=0}\left[  \theta_\frac{\mathfrak{m}}{\beta}(x^\beta)\sum^{\beta-1}_{k=0}u'_kv'_{k+t}x^{\mathfrak{m}-1-t}\right]\mod(x^{\mathfrak{m}}-1)\\
&=\theta_\frac{\mathfrak{m}}{\alpha}(x^\alpha)\left[\sum^{\alpha-1}_{n=0} 2\sum^{\alpha-1}_{j=0}u_jv_{j+n}x^{\mathfrak{m}-1-n}\right] +\\ 
& \theta_\frac{\mathfrak{m}}{\beta}(x^\beta)\left[\sum^{\beta-1}_{t=0}  \sum^{\beta-1}_{k=0}u'_kv'_{k+t}x^{\mathfrak{m}-1-t}\right]\mod(x^{\mathfrak{m}}-1).
\end{align*}

Then, arranging the terms one obtains that

$$\textbf{u}(x)\circ\textbf{v}(x)=\sum^{\mathfrak{m}-1}_{i=0} S_i x^{\mathfrak{m}-1-i}\mod(x^{\mathfrak{m}}-1).$$
Thus, $\textbf{u}(x)\circ\textbf{v}(x)=0$ if and only if $S_i=0$ for $0\leq i\leq \mathfrak{m}-1.$
\end{proof}


\begin{lemma}\label{Lemma1}
Let $\textbf{u}=(u(x)\mid u'(x))$ and $\textbf{v}(x)=(v(x)\mid v'(x))$ be elements in $R_{\alpha,\beta}$ such that {$\textbf{u}(x)\circ\textbf{v}(x)=0$}. If $u'(x)$ or $v'(x)$ equals $0$, then $u(x)v^*(x)\equiv 0\pmod{(x^\alpha-1)}$ over $\mathbb{Z}_2$. If $u(x)$ or $v(x)$ equal $0$, then $u'(x)v'^*(x)\equiv 0\pmod{(x^\beta-1)}$ over $\mathbb{Z}_4$.
\end{lemma}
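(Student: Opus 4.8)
The plan is to exploit the fact that when one of the quaternary parts vanishes, the map $\circ$ collapses to (essentially) the contribution from the binary part alone, and vice versa, so that the vanishing of $\textbf{u}(x)\circ\textbf{v}(x)$ forces the surviving summand to be zero modulo $x^{\mathfrak{m}}-1$. First I would treat the case $u'(x)=0$ (the case $v'(x)=0$ is identical by symmetry of the hypothesis in the quaternary slot up to reindexing, or one simply repeats the argument). Substituting $u'(x)=0$ into the definition of $\circ$ kills the second term, leaving
\[
\textbf{u}(x)\circ\textbf{v}(x) = 2\,u(x)\,\theta_{\frac{\mathfrak{m}}{\alpha}}(x^\alpha)\,x^{\mathfrak{m}-1-\deg(v(x))}v^*(x) \bmod (x^{\mathfrak{m}}-1).
\]
By hypothesis this is $0$ in $\mathbb{Z}_4[x]/(x^{\mathfrak{m}}-1)$. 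Since every coefficient of the right-hand side carries the factor $2$, this congruence is really a statement over $\mathbb{Z}_2$: reading it mod $2$, we get $u(x)\,\theta_{\frac{\mathfrak{m}}{\alpha}}(x^\alpha)\,x^{\mathfrak{m}-1-\deg(v(x))}v^*(x)\equiv 0 \pmod{(x^{\mathfrak{m}}-1)}$ over $\mathbb{Z}_2$.

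Next I would strip away the inessential factors. The monomial $x^{\mathfrak{m}-1-\deg(v(x))}$ is a unit in $\mathbb{Z}_2[x]/(x^{\mathfrak{m}}-1)$, so it can be cancelled. It remains to pass from a congruence mod $x^{\mathfrak{m}}-1$ involving $\theta_{\frac{\mathfrak{m}}{\alpha}}(x^\alpha)$ to a congruence mod $x^\alpha-1$. Here Proposition~\ref{xnm=xntheta} with $n=\alpha$, $m=\mathfrak{m}/\alpha$ gives $x^{\mathfrak{m}}-1=(x^\alpha-1)\theta_{\frac{\mathfrak{m}}{\alpha}}(x^\alpha)$. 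So the congruence says that $(x^\alpha-1)\theta_{\frac{\mathfrak{m}}{\alpha}}(x^\alpha)$ divides $u(x)v^*(x)\,\theta_{\frac{\mathfrak{m}}{\alpha}}(x^\alpha)$ in $\mathbb{Z}_2[x]$ (after clearing the unit $x^{\mathfrak{m}-1-\deg(v(x))}$ and choosing degree representatives). Cancelling the common factor $\theta_{\frac{\mathfrak{m}}{\alpha}}(x^\alpha)$ — which is legitimate because $\mathbb{Z}_2[x]$ is an integral domain — yields $(x^\alpha-1)\mid u(x)v^*(x)$ over $\mathbb{Z}_2$, i.e. $u(x)v^*(x)\equiv 0 \pmod{(x^\alpha-1)}$, as claimed. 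For the case $u(x)=0$ or $v(x)=0$ the first term of $\circ$ vanishes, leaving $u'(x)\theta_{\frac{\mathfrak{m}}{\beta}}(x^\beta)x^{\mathfrak{m}-1-\deg(v'(x))}{v'}^*(x)\equiv 0\pmod{(x^{\mathfrak{m}}-1)}$ over $\mathbb{Z}_4$; since $\beta$ is odd, $\theta_{\frac{\mathfrak{m}}{\beta}}(y)$ with $y=x^\beta$ need not be a zero divisor, but the same cancellation argument works provided one checks $\theta_{\frac{\mathfrak{m}}{\beta}}(x^\beta)$ is not a zero divisor in the relevant quotient — this is exactly the potential obstacle.

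The main obstacle I anticipate is the cancellation of $\theta_{\frac{\mathfrak{m}}{\beta}}(x^\beta)$ over $\mathbb{Z}_4[x]$: one cannot cancel factors freely in $\mathbb{Z}_4[x]$ since it is not a domain. The clean route is to argue in $\mathbb{Z}_4[x]$ directly: write $x^{\mathfrak{m}}-1=(x^\beta-1)\theta_{\frac{\mathfrak{m}}{\beta}}(x^\beta)$, so the congruence becomes $(x^\beta-1)\mid u'(x){v'}^*(x)\theta_{\frac{\mathfrak{m}}{\beta}}(x^\beta)\cdot(\text{unit})$ as polynomials of bounded degree; since $\gcd(\mathfrak{m}/\beta,\beta)$ divides and $\beta$ is odd, one shows $\theta_{\frac{\mathfrak{m}}{\beta}}(x^\beta)$ and $x^\beta-1$ are coprime enough that the equation $(x^\beta-1)\theta_{\frac{\mathfrak{m}}{\beta}}(x^\beta)\mid u'(x){v'}^*(x)\theta_{\frac{\mathfrak{m}}{\beta}}(x^\beta)$ forces $(x^\beta-1)\mid u'(x){v'}^*(x)$. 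Concretely, reducing modulo each prime over $2$ in the factorization of $x^\beta-1$ reduces to the field case where cancellation is valid; assembling these via CRT over $\mathbb{Z}_4[x]/(x^\beta-1)$ (valid since $\beta$ odd makes $x^\beta-1$ separable mod $2$, so Hensel lifting gives a product decomposition) completes the argument. I would present the binary case in full and then indicate that the quaternary case follows by the same steps after invoking this coprimality, flagging it as the one place where the odd-$\beta$ hypothesis is essential.
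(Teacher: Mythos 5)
Your first case ($u'$ or $v'$ zero) is essentially the paper's own argument: the $\circ$-product collapses to $2u(x)\theta_{\frac{\mathfrak{m}}{\alpha}}(x^\alpha)x^{\mathfrak{m}-1-\deg(v)}v^*(x)$, one passes to a statement over $\mathbb{Z}_2$, factors $x^{\mathfrak{m}}-1=(x^\alpha-1)\theta_{\frac{\mathfrak{m}}{\alpha}}(x^\alpha)$ via Proposition~\ref{xnm=xntheta}, and cancels $\theta_{\frac{\mathfrak{m}}{\alpha}}(x^\alpha)$ and the unit monomial in the domain $\mathbb{Z}_2[x]$. One small imprecision: ``reading the congruence mod $2$'' literally gives $0\equiv 0$; the correct step (the paper's $\mu'$ step) is that $2A=\nu(x)(x^{\mathfrak{m}}-1)$ in $\mathbb{Z}_4[x]$ forces $\nu=2\mu$, hence $A\equiv\mu(x^{\mathfrak{m}}-1)\pmod{2}$, i.e.\ $A\equiv 0\pmod{x^{\mathfrak{m}}-1}$ over $\mathbb{Z}_2$.

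The quaternary case is where you have a genuine gap. You propose to get from $(x^\beta-1)\theta_{\frac{\mathfrak{m}}{\beta}}(x^\beta)\mid u'(x)v'^*(x)\theta_{\frac{\mathfrak{m}}{\beta}}(x^\beta)\cdot(\text{unit})$ to $(x^\beta-1)\mid u'(x)v'^*(x)$ by arguing that $\theta_{\frac{\mathfrak{m}}{\beta}}(x^\beta)$ and $x^\beta-1$ are ``coprime enough'' and then reducing modulo the factors of $x^\beta-1$ to ``the field case.'' That coprimality is false in general: modulo $x^\beta-1$ one has $x^\beta\equiv 1$, so $\theta_{\frac{\mathfrak{m}}{\beta}}(x^\beta)\equiv\frac{\mathfrak{m}}{\beta}$, and since $\beta$ is odd, $\frac{\mathfrak{m}}{\beta}=\alpha/\gcd(\alpha,\beta)$ is even whenever $\alpha$ is even (e.g.\ the paper's example $\alpha=14$, $\beta=7$, where $\frac{\mathfrak{m}}{\beta}=2$). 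In that case $\theta_{\frac{\mathfrak{m}}{\beta}}(x^\beta)$ reduces to $0$ in every residue field attached to a factor of $x^\beta-1$, so the cancellation you invoke there is cancellation of zero; moreover the CRT factors $\mathbb{Z}_4[x]/(f_i)$ are Galois rings, not fields, and reducing the hypothesis modulo $x^\beta-1$ only yields $\frac{\mathfrak{m}}{\beta}\,u'(x)v'^*(x)\equiv 0$, which for even $\frac{\mathfrak{m}}{\beta}$ is strictly weaker than the conclusion. The repair is simpler than your plan and is what the paper's ``similar argument'' amounts to: stay in $\mathbb{Z}_4[x]$ with the identity $u'(x)x^{N}v'^*(x)\,\theta_{\frac{\mathfrak{m}}{\beta}}(x^\beta)=\mu(x)(x^\beta-1)\theta_{\frac{\mathfrak{m}}{\beta}}(x^\beta)$ and cancel $\theta_{\frac{\mathfrak{m}}{\beta}}(x^\beta)$ there; it is monic, hence a non-zero-divisor of $\mathbb{Z}_4[x]$ (even though it is a zero divisor in $\mathbb{Z}_4[x]/(x^{\mathfrak{m}}-1)$, which is why ``the relevant quotient'' is the wrong place to cancel). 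This gives $u'(x)x^{N}v'^*(x)=\mu(x)(x^\beta-1)$, and since $x$ is a unit modulo $x^\beta-1$, $u'(x)v'^*(x)\equiv 0\pmod{x^\beta-1}$ over $\mathbb{Z}_4$.
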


\begin{proof}
Let $u'(x)$ or $v'(x)$ equal $0$, then 
$$0=\textbf{u}(x)\circ\textbf{v}(x)=2u(x)\theta_{\frac{\mathfrak{m}}{\alpha}}(x^\alpha)x^{\mathfrak{m}-1-\deg(v(x))}v^*(x)+0 \mod(x^{\mathfrak{m}}-1).$$ 
So, 
$$2u(x)\theta_{\frac{\mathfrak{m}}{\alpha}}(x^\alpha)x^{\mathfrak{m}-1-\deg(v(x))}v^*(x)=2\mu'(x)(x^\mathfrak{m}-1),$$
for some $\mu'(x)\in\mathbb{Z}_4[x]$. 

This is equivalent to 
$$u(x)\theta_{\frac{\mathfrak{m}}{\alpha}}(x^\alpha)x^{\mathfrak{m}-1-\deg(v(x))}v^*(x)=\mu'(x)(x^\mathfrak{m}-1)\in\mathbb{Z}_2[x].$$

By Proposition \ref{xnm=xntheta},
$$u(x)x^{\mathfrak{m}}v^*(x)=\mu(x)(x^\alpha-1),$$
$$u(x)v^*(x)\equiv 0\pmod{(x^\alpha-1)}.$$
A similar argument can be used to prove the other case.
\end{proof}


The following proposition determines the degrees of the generator 
polynomials of the dual code in terms of the degrees of the generator polynomials of 
the code. These results will be helpful to determine the 
generator polynomials of the dual code.

\begin{proposition}\label{DualPolynomialDegrees}\label{deg_bar_b}
Let ${\cal C}=\langle (b\mid { 0}), (\ell \mid  fh +2f) \rangle$ be a ${\mathbb{Z}_2 {\mathbb{Z}_4}}$-additive cyclic code of type $(\alpha, \beta; \gamma , \delta; \kappa)$, where $fgh=x^\beta-1$, and with dual code ${\cal C}^\perp = \langle (\bar{b}\mid { 0}), (\bar{\ell} \mid  \bar{f}\bar{h} +2\bar{f}) \rangle ,$ where $\bar{f}\bar{g}\bar{h} = x^\beta -1.$
Then, 
\begin{align*}
\deg(\bar{b})\! &=\!  \alpha - \deg(\gcd(b,\ell)),\\
\deg(\bar{f})\! &=\! \deg(g)\!+\!\deg(\gcd(b,\ell))\!-\!\deg(\gcd(b,\ell g)),\\
\deg(\bar{h})\! &=\! \deg(h)\!-\!\deg(b)\!-\!\deg(\gcd(b,\ell))\!+\!2\deg(\gcd(b,\ell g)),\\
\deg(\bar{g})\! &=\! \deg(f)\!+\!\deg(b)\!-\!\deg(\gcd(b,\ell g)).\\
\end{align*} 
\end{proposition}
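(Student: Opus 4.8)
The plan is to turn the problem around. Since ${\cal C}^\perp$ is itself a $\mathbb{Z}_2\mathbb{Z}_4$-additive cyclic code, presented in the standard form $\langle(\bar b\mid 0),(\bar\ell\mid\bar f\bar h+2\bar f)\rangle$ described before the proposition, both Theorem~\ref{TypeDependingDeg} and Proposition~\ref{k1k2d1d2} apply verbatim to ${\cal C}^\perp$ and express the degrees we are after directly in terms of the type $(\alpha,\beta;\bar\gamma,\bar\delta;\bar\kappa)$ of ${\cal C}^\perp$ and the auxiliary parameter $\bar\kappa_1$. Concretely, Proposition~\ref{k1k2d1d2} gives $\deg(\bar b)=\alpha-\bar\kappa_1$; Theorem~\ref{TypeDependingDeg} gives $\deg(\bar g)=\bar\delta$ and $\bar\gamma=\alpha-\deg(\bar b)+\deg(\bar h)$, hence $\deg(\bar h)=\bar\gamma-\alpha+\deg(\bar b)$; and $\deg(\bar f)=\beta-\deg(\bar g)-\deg(\bar h)$ because $\bar f\bar g\bar h=x^\beta-1$. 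Thus the whole statement reduces to computing $\bar\gamma$, $\bar\delta$ and $\bar\kappa_1$ as functions of the data of ${\cal C}$, and then substituting the expressions $\gamma=\alpha-\deg(b)+\deg(h)$, $\delta=\deg(g)$, $\kappa=\alpha-\deg(\gcd(b,\ell g))$ from Theorem~\ref{TypeDependingDeg} together with $\deg(f)+\deg(g)+\deg(h)=\beta$.

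For $\bar\gamma$ and $\bar\delta$ I would quote the description of the type of a dual code from \cite{AddDual}, namely $\bar\gamma=\alpha+\gamma-2\kappa$ and $\bar\delta=\beta-\gamma-\delta+\kappa$; a quick sanity check is that $2^{\bar\gamma}4^{\bar\delta}=2^{\alpha+\gamma-2\kappa}4^{\beta-\gamma-\delta+\kappa}=|{\cal C}^\perp|$ as in Proposition~\ref{Dimension_subcodes}. For $\bar\kappa_1$, recall it is by definition the dimension of the subcode $\{(u\mid 0)\in{\cal C}^\perp\}$. Since $(u\mid 0)\cdot(v\mid v')=2\sum_i u_iv_i$, a vector $(u\mid 0)$ lies in ${\cal C}^\perp$ exactly when $u$ is orthogonal over $\mathbb{Z}_2$ to every vector of ${\cal C}_X$; hence this subcode is (identified with) the binary dual of ${\cal C}_X$. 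As ${\cal C}_X=\langle\gcd(b,\ell)\rangle$ with $\gcd(b,\ell)\mid x^\alpha-1$, we get $\dim{\cal C}_X=\alpha-\deg(\gcd(b,\ell))$, and therefore $\bar\kappa_1=\deg(\gcd(b,\ell))$.

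With these three quantities in hand, the four formulas follow by substitution and simplification. From $\deg(\bar b)=\alpha-\bar\kappa_1$ we obtain the first formula at once. For the second, $\deg(\bar g)=\bar\delta=\beta-\gamma-\delta+\kappa$; replacing $\gamma,\delta,\kappa$ and using $\beta=\deg(f)+\deg(g)+\deg(h)$ collapses this to $\deg(f)+\deg(b)-\deg(\gcd(b,\ell g))$. For the third, $\deg(\bar h)=\bar\gamma-\alpha+\deg(\bar b)=(\gamma-2\kappa)+(\alpha-\deg(\gcd(b,\ell)))$, which reduces to $\deg(h)-\deg(b)-\deg(\gcd(b,\ell))+2\deg(\gcd(b,\ell g))$. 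For the fourth, $\deg(\bar f)=\beta-\deg(\bar g)-\deg(\bar h)$ reduces to $\deg(g)+\deg(\gcd(b,\ell))-\deg(\gcd(b,\ell g))$. Each of these is a one-line linear computation that I would carry out explicitly.

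The step I expect to be the real content is the identification of $\{(u\mid 0)\in{\cal C}^\perp\}$ with $({\cal C}_X)^\perp$, which is what pins down $\bar\kappa_1$; once that is established, everything else is bookkeeping resting on the already-proven Theorems~\ref{SpanningSetTh} and \ref{TypeDependingDeg}, Proposition~\ref{k1k2d1d2} (applied to ${\cal C}^\perp$), and the dual-type formulas of \cite{AddDual}. I would also verify the degenerate situations separately — in particular $\alpha=0$, where $\kappa=\bar\kappa_1=0$ by the stated convention, and cases where one of $b,f,g,h$ is constant — to make sure none of the degree identities is violated there.
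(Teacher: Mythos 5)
Your proposal is correct and follows essentially the same route as the paper: both rest on the dual-type formulas $\bar{\gamma}=\alpha+\gamma-2\kappa$, $\bar{\delta}=\beta-\gamma-\delta+\kappa$ from \cite{AddDual} combined with Theorem~\ref{TypeDependingDeg} applied to ${\cal C}$ and ${\cal C}^\perp$, and both pin down $\deg(\bar{b})$ through the identification of the subcode of ${\cal C}^\perp$ with zero quaternary part with $({\cal C}_X)^\perp$ (the paper phrases this as $({\cal C}_X)^\perp=\langle\bar{b}\rangle$ and uses Propositions~\ref{Dimension_subcodes} and \ref{k1k2d1d2} on ${\cal C}$, while you equivalently apply Proposition~\ref{k1k2d1d2} to ${\cal C}^\perp$ via $\bar{\kappa}_1$). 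The remaining substitutions you describe are exactly the paper's closing computation.
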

\begin{proof}
Let ${\cal C}^\perp$ be a code of type $(\alpha, \beta; \bar{\gamma }, 
\bar{\delta}; \bar{\kappa})$. It is easy to prove that $({\cal C}_X)^\perp$ is a 
binary cyclic code generated by $\bar{b}$, so $|({\cal 
C}_X)^\perp|=2^{\alpha-\deg(\bar{b})}$. Moreover, by Proposition 
\ref{Dimension_subcodes}, $|({\cal C}_X)^\perp|=2^{\alpha-\kappa-\delta_1}$ and 
by Proposition \ref{k1k2d1d2}, we obtain that $\deg(\bar{b}) =  
\alpha-\deg(\gcd(b,\ell)).$ Finally, from \cite{AddDual} it is known that
\begin{align*}
\bar{\gamma} &= \alpha +\gamma -2\kappa,\\
\bar{\delta} &= \beta -\gamma -\delta +\kappa,\\
\bar{\kappa} &= \alpha - \kappa,
\end{align*}
and applying Theorem \ref{TypeDependingDeg} to the parameters of ${\cal C}$ and ${\cal C}^\perp$, we obtain the result.
\end{proof}

We know that a $\mathbb{Z}_2\mathbb{Z}_4$-additive code ${\cal C}$ is separable if and only if ${\cal C}^\perp$ is separable. 
Moreover, if a $\mathbb{Z}_2\mathbb{Z}_4$-additive cyclic code is separable, then it is easy to find the generator polynomials of the dual, that are given in the following proposition.

\begin{proposition}
Let ${\cal C}=\langle (b\mid { 0}), ({ 0} \mid fh +2f) \rangle$ be a separable ${\mathbb{Z}_2 {\mathbb{Z}_4}}$-additive cyclic code of type $(\alpha, \beta; \gamma , \delta; \kappa)$, where $fgh=x^\beta-1$. Then, $${\cal C}^\perp = \langle (\frac{x^\alpha-1}{b^*}\mid { 0}), ({ 0} \mid  g^*h^* +2g^*) \rangle.$$
\end{proposition}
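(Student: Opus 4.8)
The plan is to use the separability structure to reduce everything to the theory of cyclic codes over $\mathbb{Z}_2$ and over $\mathbb{Z}_4$. Since ${\cal C}=\langle (b\mid 0), (0\mid fh+2f)\rangle$ is separable, we have ${\cal C}={\cal C}_X\times{\cal C}_Y$, where ${\cal C}_X=\langle b\rangle$ is the binary cyclic code with generator polynomial $b$ (note $b\mid x^\alpha-1$ since $\deg(\ell)<\deg(b)$ forces $\ell=0$ here and ${\cal C}_X=\langle\gcd(b,\ell)\rangle=\langle b\rangle$), and ${\cal C}_Y=\langle fh+2f\rangle$ is the cyclic code over $\mathbb{Z}_4$ with $fhg=x^\beta-1$. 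By the remark preceding this proposition (and the reference to \cite{AddDual}), ${\cal C}$ separable implies ${\cal C}^\perp=({\cal C}_X)^\perp\times({\cal C}_Y)^\perp$, and by the penultimate paragraph of the excerpt this makes ${\cal C}^\perp$ a separable $\mathbb{Z}_2\mathbb{Z}_4$-additive cyclic code, hence of the form $\langle (\bar b\mid 0),(0\mid \bar f\bar h+2\bar f)\rangle$ with $\bar b$ the generator polynomial of $({\cal C}_X)^\perp$ and $\bar f\bar h+2\bar f$ the generator polynomial of $({\cal C}_Y)^\perp$.

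First I would handle the binary part: it is classical (see \cite{macwilliams}) that the dual of the binary cyclic code $\langle b\rangle$ with $b\mid x^\alpha-1$ is the cyclic code generated by the reciprocal polynomial of the check polynomial, namely $\langle (x^\alpha-1)/b\rangle^* = \langle ((x^\alpha-1)/b)^*\rangle$, and since $x^\alpha-1$ is (up to a unit, which over $\mathbb{Z}_2$ is trivial) self-reciprocal, $((x^\alpha-1)/b)^* = (x^\alpha-1)/b^*$. This gives $\bar b = \dfrac{x^\alpha-1}{b^*}$, matching the claimed first component.

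Next I would handle the quaternary part. By the structure theory of cyclic codes over $\mathbb{Z}_4$ (see \cite{PQ}, \cite{macwilliams}, \cite{Wan}), if ${\cal C}_Y=\langle fh+2f\rangle$ with $fhg=x^\beta-1$, then its dual is $\langle g^*h^* + 2g^*\rangle$; this is the standard description of the dual of a $\mathbb{Z}_4$-cyclic code in terms of reciprocals of the generator factors, using that $\beta$ is odd so $x^\beta-1$ factors into pairwise coprime factors and reciprocation permutes these factors while fixing the product $x^\beta-1$ (up to a unit of $\mathbb{Z}_4$). Concretely one checks that with $\bar f = g^*$, $\bar h = h^*$, $\bar g = f^*$ one has $\bar f\bar g\bar h = (fgh)^* = (x^\beta-1)^* = x^\beta-1$ up to a unit, the degree conditions of Proposition~\ref{DualPolynomialDegrees} are met (with all the $\gcd$ terms collapsing because $\ell=0$), and the required divisibility/orthogonality relations hold. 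Assembling the two parts via ${\cal C}^\perp=({\cal C}_X)^\perp\times({\cal C}_Y)^\perp$ yields ${\cal C}^\perp=\langle (\tfrac{x^\alpha-1}{b^*}\mid 0),(0\mid g^*h^*+2g^*)\rangle$, as claimed.

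The main obstacle is not conceptual but bookkeeping: one must cite precisely the correct form of the dual of a $\mathbb{Z}_4$-cyclic code generated by $fh+2f$ (the literature states it as $g^*h^*+2g^*$ or equivalently via the annihilator), and verify that this representative is consistent with the normalization we fixed for $\mathbb{Z}_2\mathbb{Z}_4$-additive cyclic codes — in particular that $\bar f\bar h\bar g = x^\beta-1$ exactly in $\mathbb{Z}_4[x]$ (absorbing the unit $\pm1$ appropriately), and that $\deg(\bar\ell)<\deg(\bar b)$ holds trivially here since $\bar\ell=0$. One should also double-check the edge case $b = x^\alpha - 1$ (i.e. ${\cal C}_X = 0$), where $\bar b = 1$ and $({\cal C}_X)^\perp$ is the whole space, which is consistent with the formula.
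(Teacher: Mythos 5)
Your proposal is correct and follows essentially the same route as the paper: since ${\cal C}$ is separable, ${\cal C}^\perp=({\cal C}_X)^\perp\times({\cal C}_Y)^\perp$, and then one invokes the classical descriptions $({\cal C}_X)^\perp=\langle \frac{x^\alpha-1}{b^*}\rangle$ for binary cyclic codes and $({\cal C}_Y)^\perp=\langle g^*h^*+2g^*\rangle$ for cyclic codes over $\mathbb{Z}_4$. The extra bookkeeping you flag (reciprocals, normalization, the edge case $b=x^\alpha-1$) is fine but not needed beyond what the paper's one-line argument already uses.
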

\begin{proof}
If ${\cal C}$ is separable, then ${\cal C}^ \perp = ({\cal C}_X)^\perp \times ({\cal C}_Y)^\perp$, where $({\cal C}_X)^\perp= \langle \frac{x^\alpha-1}{b^*}\rangle$ and $({\cal C}_Y)^\perp = \langle g^*h^* +2g^* \rangle$.
\end{proof}


\begin{proposition}\label{bar(b)}
Let ${\cal C}=\langle (b\mid { 0}), (\ell \mid fh +2f) \rangle$ be a ${\mathbb{Z}_2 {\mathbb{Z}_4}}$-additive cyclic code of type $(\alpha, \beta; \gamma , \delta; \kappa)$ with dual code ${\cal C}^\perp = \langle (\bar{b}\mid { 0}), (\bar{\ell} \mid \bar{f}\bar{h} +2\bar{f}) \rangle.$
Then, $$\bar{b} = \frac{x^\alpha -1}{(gcd(b,\ell))^*}\in\mathbb{Z}_2[x].$$
\end{proposition}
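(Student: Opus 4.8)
The plan is to combine the degree computation already available from Proposition~\ref{DualPolynomialDegrees} with the structural characterization of $\bar b$ as a divisor of $x^\alpha-1$, and then pin down which divisor it is by a duality (orthogonality) argument. We already know from Proposition~\ref{DualPolynomialDegrees} that $\deg(\bar b)=\alpha-\deg(\gcd(b,\ell))$, equivalently $\deg(\bar b)=\deg\!\left(\frac{x^\alpha-1}{(\gcd(b,\ell))^*}\right)$, since $\deg(p^*)=\deg(p)$ for a polynomial with nonzero constant term (and $\gcd(b,\ell)\mid b\mid x^\alpha-1$ guarantees this). So it remains to show the two monic divisors of $x^\alpha-1$ of the same degree, $\bar b$ and $\frac{x^\alpha-1}{(\gcd(b,\ell))^*}$, actually coincide; it suffices to show that $\frac{x^\alpha-1}{(\gcd(b,\ell))^*}$ divides $\bar b$, or conversely that $\bar b$ divides it.

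First I would recall that $\bar b$ generates the binary cyclic code $({\cal C}_X)^\perp$, as noted in the proof of Proposition~\ref{DualPolynomialDegrees}; equivalently $\bar b$ is the generator polynomial of $(({\cal C}^\perp)_X)$. On the other hand, the projection ${\cal C}_X$ is the binary cyclic code generated by $\gcd(b,\ell)$ (stated just after~\eqref{form}). From the classical theory of binary cyclic codes, the dual of the cyclic code $\langle \gcd(b,\ell)\rangle$ has generator polynomial $\frac{x^\alpha-1}{(\gcd(b,\ell))^*}$. The key point that makes the argument work is the relation between $({\cal C}^\perp)_X$ and $({\cal C}_X)^\perp$: in general these are not equal, but we do have the inclusion $({\cal C}^\perp)_X \subseteq ({\cal C}_X)^\perp$ is false in general — rather the correct containment goes the other way through $({\cal C}_b)$, so the cleanest route is to use the orthogonality map $\circ$ directly. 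Concretely, I would take a generator of the form $(\bar b\mid 0)\in{\cal C}^\perp$ and pair it, via $\circ$, against the generators $(b\mid 0)$ and $(\ell\mid fh+2f)$ of ${\cal C}$. Pairing $(\bar b\mid 0)$ with $(\ell\mid fh+2f)$ and invoking Lemma~\ref{Lemma1} (the case where the second $\mathbb{Z}_2$-component, namely the first argument's $u'(x)=0$) gives $\bar b\,\ell^*\equiv 0\pmod{x^\alpha-1}$; pairing with $(b\mid 0)$ likewise gives $\bar b\, b^*\equiv 0\pmod{x^\alpha-1}$. Hence $\bar b$ annihilates both $b^*$ and $\ell^*$ modulo $x^\alpha-1$, so $\bar b$ is divisible by $\frac{x^\alpha-1}{\gcd(b^*,\ell^*)} = \frac{x^\alpha-1}{(\gcd(b,\ell))^*}$, using that reciprocation commutes with gcd up to units for divisors of $x^\alpha-1$.

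Having the divisibility $\frac{x^\alpha-1}{(\gcd(b,\ell))^*}\mid \bar b$ together with the equality of degrees from Proposition~\ref{DualPolynomialDegrees}, and both polynomials being monic divisors of $x^\alpha-1$, forces $\bar b = \frac{x^\alpha-1}{(\gcd(b,\ell))^*}$, which is the claim. The main obstacle I anticipate is the bookkeeping in the $\circ$-pairing step: one must check that pairing $(\bar b\mid 0)$ against $(\ell\mid fh+2f)$ genuinely reduces, via Lemma~\ref{Lemma1}, to the binary congruence $\bar b\,\ell^*\equiv 0\pmod{x^\alpha-1}$ — the lemma is stated for the case where one of the $\mathbb{Z}_4$-components vanishes, and here it is the first vector's $\mathbb{Z}_4$-part that is $0$, so the hypothesis applies and only the binary term of $\circ$ survives. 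A secondary point to verify carefully is that $\gcd(b^*,\ell^*)=(\gcd(b,\ell))^*$ up to a unit, which holds because all these polynomials divide $x^\alpha-1$ (with $\alpha$ odd, hence $x^\alpha-1$ squarefree over $\mathbb{Z}_2$) and reciprocation is a degree-preserving ring automorphism on the relevant quotient that permutes the irreducible factors of $x^\alpha-1$. Once these two routine facts are in place the proof is immediate.
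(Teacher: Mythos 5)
Your proposal is correct and follows essentially the same route as the paper: pair $(\bar b\mid 0)$ against the two generators of ${\cal C}$ via $\circ$, apply Lemma~\ref{Lemma1} to get the binary congruences, combine them into a statement about $\gcd(b,\ell)$, and finish with the degree formula of Proposition~\ref{DualPolynomialDegrees}; the only (cosmetic) difference is that you reverse the order of the arguments in $\circ$, so the reciprocals land on $b,\ell$ instead of on $\bar b$ as in the paper. Two incidental slips worth fixing but not affecting the argument: $\alpha$ need not be odd (only $\beta$ is), though squarefreeness is not actually needed for $\gcd(b^*,\ell^*)=(\gcd(b,\ell))^*$ since $b\mid x^\alpha-1$ has nonzero constant term, and $\bar b$ generates $({\cal C}_X)^\perp$ but not $({\cal C}^\perp)_X$, which is generated by $\gcd(\bar b,\bar\ell)$.
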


\begin{proof}
We have that $(\bar{b}\mid { 0})$ belongs to ${\cal C}^\perp$. Then,
\begin{align*}
(b\mid { 0})\circ (\bar{b}\mid { 0}) = & 0 
,\\
(\ell \mid fh +2f)\circ (\bar{b}\mid { 0}) = & 0 
.
\end{align*}
Therefore, by Lemma \ref{Lemma1},
\begin{align*}
b\bar{b}^*\equiv & 0 \pmod{(x^\alpha-1)}
,\\
\ell \bar{b}^*\equiv & 0 \pmod{(x^\alpha-1)}
,
\end{align*}
over $\mathbb{Z}_2$. So, $ \gcd(b,\ell)\bar{b}^*\equiv 0 \pmod{(x^\alpha-1)}$, and there exist $\mu\in \mathbb{Z}_2[x]$ such that $ \gcd(b,\ell)\bar{b}^*= \mu(x^\alpha-1)$. \\
Moreover, since $\gcd(b,\ell)$ and $\bar{b}^*$ divides $(x^\alpha-1)$ and, by Proposition \ref{deg_bar_b}, we have that $\deg(\bar{b}) =  \alpha-\deg(\gcd(b,\ell)).$ We conclude that
$$\bar{b}^*=\frac{x^\alpha-1}{\gcd(b,\ell)}\in\mathbb{Z}_2[x].$$
\end{proof}

\begin{proposition}\label{bar_fh}
Let ${\cal C}=\langle (b\mid { 0}), (\ell \mid fh +2f) \rangle$ be a ${\mathbb{Z}_2 {\mathbb{Z}_4}}$-additive cyclic code of type $(\alpha, \beta; \gamma , \delta; \kappa)$, where $fgh=x^\beta-1$, and with dual code ${\cal C}^\perp = \langle (\bar{b}\mid { 0}), (\bar{\ell} \mid \bar{f}\bar{h} +2\bar{f}) \rangle ,$ where $\bar{f}\bar{g}\bar{h} = x^\beta -1.$
Then, 
$\bar{f}\bar{h}$ is the Hensel lift of the polynomial $\frac{(x^\beta-1)\gcd(b,\ell g)^*}{f^*b^*}\in\mathbb{Z}_2[x].$
\end{proposition}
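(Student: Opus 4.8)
The plan is to mimic the argument used for $\bar b$ in Proposition~\ref{bar(b)}, but now applied to the $\mathbb{Z}_4$-part of the dual. First I would produce enough orthogonality relations to pin down $\bar f\bar h$ modulo $2$, and then invoke uniqueness of the Hensel lift. Concretely, recall from Lemma~\ref{generators_Cb} that ${\cal C}_b=\langle(b\mid 0),(\ell g\mid 2fg),(0\mid 2fh)\rangle$, so every order-two codeword of ${\cal C}$, when restricted to the $\mathbb{Z}_4$-coordinates, lies in the binary cyclic code generated by $\gcd(fg, \, \text{image of }\ell g\text{-part})$; more to the point, the element $(\bar\ell\mid \bar f\bar h+2\bar f)\in{\cal C}^\perp$ must be orthogonal to all of $(b\mid 0)$, $(\ell\mid fh+2f)$ and their shifts. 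Writing out $\circ$ against $(b\mid 0)$ and using Lemma~\ref{Lemma1} gives $b\,\bar\ell^{\,*}\equiv 0\pmod{x^\alpha-1}$ over $\mathbb{Z}_2$, which controls $\bar\ell$ but not yet $\bar f\bar h$; the useful relation comes from pairing with $(\ell\mid fh+2f)$ and also with the generators of ${\cal C}_b$ whose $\mathbb{Z}_2$-part vanishes, namely $(0\mid 2fh)$ — this forces $(\bar f\bar h+2\bar f)\cdot(2fh)^*$-type conditions, hence $2\,\bar f\bar h\,(fh)^*\equiv 0\pmod{x^\beta-1}$ over $\mathbb{Z}_4$, i.e. $\overline{fh}\cdot(fh)^*\equiv 0 \pmod{x^\beta-1}$ modulo $2$.

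Next I would reduce everything modulo $2$ and work in $\mathbb{Z}_2[x]/(x^\beta-1)$. There, $\bar f\bar h \bmod 2$ is a divisor of $x^\beta-1$ (since $\bar f\bar g\bar h=x^\beta-1$ and $\beta$ is odd, so $x^\beta-1$ is squarefree over $\mathbb{Z}_2$ and these factorizations are genuine divisor factorizations). The orthogonality of $(\bar\ell\mid\bar f\bar h+2\bar f)$ with the full generating set of ${\cal C}$, pushed through Lemma~\ref{Lemma1} on the $\mathbb{Z}_4$-component, should yield that $\bar f\bar h \bmod 2$ is divisible by $\dfrac{x^\beta-1}{f^*\cdot(\text{something})}$; I expect the "something" to turn out to be $\dfrac{b^*}{\gcd(b,\ell g)^*}$, matching the degree count. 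Indeed, Proposition~\ref{DualPolynomialDegrees} already tells us $\deg(\bar f)+\deg(\bar h)=\deg(\bar f\bar h)$ equals $\big(\deg(g)+\deg(\gcd(b,\ell))-\deg(\gcd(b,\ell g))\big)+\big(\deg(h)-\deg(b)-\deg(\gcd(b,\ell))+2\deg(\gcd(b,\ell g))\big)=\deg(g)+\deg(h)-\deg(b)+\deg(\gcd(b,\ell g))=\deg(x^\beta-1)-\deg(f)-\deg(b)+\deg(\gcd(b,\ell g))=\deg\!\Big(\dfrac{(x^\beta-1)\gcd(b,\ell g)^*}{f^*b^*}\Big)$, using $\deg p^*=\deg p$ for the divisors involved. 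So once I show the divisibility $\dfrac{(x^\beta-1)\gcd(b,\ell g)^*}{f^*b^*}\ \Big|\ \bar f\bar h \bmod 2$ (both being divisors of the squarefree $x^\beta-1$), the matching degrees force equality in $\mathbb{Z}_2[x]/(x^\beta-1)$.

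To get that divisibility I would pair $(\bar\ell\mid\bar f\bar h+2\bar f)$ against each of $(b\mid 0)$, $(\ell\mid fh+2f)$, $(\ell g\mid 2fg)$, $(0\mid 2fh)$, apply Lemma~\ref{Lemma1} (or direct computation of $\circ$) to extract congruences on the $\mathbb{Z}_4$-side, reduce mod $2$, and take greatest common divisors; the key input is Corollary~\ref{bdiviXSgcd}, which says $b$ divides $h\gcd(b,\ell g)\bmod 2$ and $b$ divides $\tfrac{x^\beta-1}{f}\gcd(b,\ell g)\bmod 2$, exactly the relations needed to see that $f^*b^*$ divides $(x^\beta-1)\gcd(b,\ell g)^*$ so the claimed polynomial is genuinely in $\mathbb{Z}_2[x]$ and divides $x^\beta-1$. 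Finally, since $\beta$ is odd, $\bar f\bar h$ is the unique monic divisor of $x^\beta-1$ in $\mathbb{Z}_4[x]$ lifting its reduction mod $2$ (the Hensel lift), so identifying $\bar f\bar h\bmod 2$ identifies $\bar f\bar h$ itself, which is the assertion.

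The main obstacle I anticipate is the bookkeeping in the second step: squeezing precisely the divisor $\dfrac{x^\beta-1}{f^*b^*/\gcd(b,\ell g)^*}$ — and no larger divisor — out of the orthogonality relations, i.e. showing the gcd of all the extracted cofactors is exactly this and not something that properly divides it. The degree equality from Proposition~\ref{DualPolynomialDegrees} is what rescues the argument: it lets me conclude equality from a one-sided divisibility, so I only need the cleaner "$\bar f\bar h \bmod 2$ is divisible by the target" half, which follows from orthogonality plus Corollary~\ref{bdiviXSgcd}, rather than a sharp two-sided gcd computation.
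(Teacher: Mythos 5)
Your overall skeleton (orthogonality via $\circ$, Lemma~\ref{Lemma1}, the degree count from Proposition~\ref{DualPolynomialDegrees}, and uniqueness of the Hensel lift for odd $\beta$) matches the paper's, and the reduction ``one-sided divisibility plus equal degrees forces equality'' is sound. The gap is exactly at the step you flag as the obstacle: the divisibility of $\bar f\bar h \bmod 2$ by the full target $\frac{(x^\beta-1)\gcd(b,\ell g)^*}{f^*b^*}$ does not follow from the pairings you list. Lemma~\ref{Lemma1} applies only when one component of one of the two vectors vanishes, so among $(b\mid 0)$, $(\ell\mid fh+2f)$, $(\ell g\mid 2fg)$, $(0\mid 2fh)$ the only quaternary information you can extract is from $(0\mid 2fh)$, namely $2\bar f\bar h (fh)^*\equiv 0 \pmod{x^\beta-1}$, i.e. $g^*\mid \bar f\bar h \bmod 2$; the pairings with $(\ell\mid fh+2f)$ and $(\ell g\mid 2fg)$ produce mixed congruences involving the unknown $\bar\ell$ and cannot be separated at this stage (the paper determines $\bar\ell$ only later, using the formulas for $\bar f\bar h$ and $\bar f$). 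Since $\deg(g^*)=\deg(g)$ is in general strictly smaller than $\deg(\bar f\bar h)=\deg(g)+\deg(h)-\deg(b)+\deg(\gcd(b,\ell g))$, the degree argument cannot close this; and Corollary~\ref{bdiviXSgcd} only guarantees that the target polynomial is a genuine divisor of $x^\beta-1$, it does not yield the divisibility of $\bar f\bar h$ by it.

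The missing idea is to first manufacture a codeword of ${\cal C}$ with zero binary component and the right quaternary part: multiply $(\ell g\mid 2fg)$ by $\frac{b}{\gcd(b,\ell g)}$ so that its binary part becomes a multiple of $b$ and can be cancelled using $(b\mid 0)$, giving $(0\mid \frac{b}{\gcd(b,\ell g)}2fg)\in{\cal C}$; then, since $h$ and $g$ are coprime, write $p_1fh+p_2fg=f$ with $p_1,p_2\in\mathbb{Z}_4[x]$ and combine with $(0\mid \frac{b}{\gcd(b,\ell g)}2fh)$, a multiple of $(0\mid 2fh)=2\star(\ell\mid fh+2f)$, to obtain $(0\mid \frac{b}{\gcd(b,\ell g)}2f)\in{\cal C}$. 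Pairing $(\bar\ell\mid \bar f\bar h+2\bar f)$ against this codeword, Lemma~\ref{Lemma1} gives $\bar f\bar h\cdot\frac{b^*f^*}{\gcd(b,\ell g)^*}\equiv 0 \pmod{x^\beta-1}$ over $\mathbb{Z}_2$, which is precisely the one-sided divisibility you need; your degree count and the Hensel-lift uniqueness then finish the proof as you describe. (Equivalently, you could keep the two relations coming from $(0\mid 2fh)$ and $(0\mid \frac{b}{\gcd(b,\ell g)}2fg)$ and take their least common multiple, using that $x^\beta-1$ is squarefree over $\mathbb{Z}_2$ for odd $\beta$.) This clearing of the binary part is where $\gcd(b,\ell g)$ actually enters the formula; without it your argument stalls at divisibility by $g^*$ only.
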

\begin{proof}
It is known that $h$ and $g$ are coprime, from which we deduce easily that $p_1fh+p_2fg=f$, for some $p_1,p_2\in\mathbb{Z}_4[x]$. Since $(b\mid 0)$, $(0\mid 2fh)$ and $(\ell g\mid 2fg)$ belong to ${\cal C}$, then 
$$(0\mid \frac{b}{\gcd(b,\ell g)}(2p_1fh+2p_2fg))=(0\mid \frac{b}{\gcd(b,\ell g)}2f)\in{\cal C}.$$

Therefore,
$$(\bar{\ell} \mid \bar{f}\bar{h} +2\bar{f})\circ (0\mid \frac{b}{\gcd(b,\ell g)}2f)=0 .$$
Thus, by Lemma \ref{Lemma1},
$$(\bar{f}\bar{h} +2\bar{f})\left( \frac{b^*2f^*}{\gcd(b,\ell g)^*}\right)\equiv 0\pmod{(x^\beta -1)},$$
and
\begin{equation}\label{equ1}
(2\bar{f}\bar{h})\left( \frac{b^*f^*}{\gcd(b,\ell g)^*}\right)=2\mu (x^\beta -1),
\end{equation}
for some $\mu\in\mathbb{Z}_4[x].$

If (\ref{equ1}) holds over $\mathbb{Z}_4$, then it is equivalent to
$$({\bar{f}\bar{h}})\left( \frac{b^*{f^*}}{\gcd(b,\ell g)^*}\right)={\mu} (x^\beta -1)\in\mathbb{Z}_2[x]$$

It is known that ${\bar{f}\bar{h}}$ is a divisor of $x^\beta-1$ and, by Corollary \ref{bdiviXSgcd}, we have that $\left( \frac{b^*{f^*}}{\gcd(b,\ell g)^*}\right)$ divides $(x^\beta-1)$ over $\mathbb{Z}_2$. By Corollary \ref{DualPolynomialDegrees}, $\deg({\bar{f}\bar{h}})=\beta-\deg(f)-\deg(b)+\deg(\gcd(b,\ell g))$, so 
$$\beta=\deg\left({\bar{f}\bar{h}} \frac{b^*{f^*}}{\gcd(b,\ell g)^*}\right)=\deg(x^\beta-1).$$
Hence, we obtain that ${\mu}=1\in\mathbb{Z}_2$ and
\begin{equation}\label{HenLift_fh}
{\bar{f}\bar{h}}=\frac{(x^\beta-1)\gcd(b,\ell g)^*}{{f^*}{b^*}}\in\mathbb{Z}_2[x].
\end{equation}

Since $\beta$ is odd and by the uniqueness of the Hensel lift \cite[p.73]{Wan}, $\bar{f}\bar{h}$ is the unique monic polynomial in $\mathbb{Z}_4[x]$ dividing $(x^\beta-1)$ and satisfying (\ref{HenLift_fh}).
\end{proof}


\begin{proposition}\label{bar_f}
Let ${\cal C}=\langle (b\mid { 0}), (\ell \mid fh +2f) \rangle$ be a ${\mathbb{Z}_2 {\mathbb{Z}_4}}$-additive cyclic code of type $(\alpha, \beta; \gamma , \delta; \kappa)$, where $fgh=x^\beta-1$, and with dual code ${\cal C}^\perp = \langle (\bar{b}\mid { 0}), (\bar{\ell} \mid \bar{f}\bar{h} +2\bar{f}) \rangle ,$ where $\bar{f}\bar{g}\bar{h} = x^\beta -1.$
Then, 
$\bar{f}$ is the Hensel lift of the polynomial $\frac{(x^\beta-1)\gcd(b,\ell)^*}{f^*h^*\gcd(b,\ell g)^*}\in\mathbb{Z}_2[x].$
\end{proposition}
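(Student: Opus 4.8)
The strategy mirrors the proof of Proposition \ref{bar_fh}: exhibit an explicit codeword of ${\cal C}$ whose quaternary part is a suitable multiple of $fh$ (not just $2fh$), feed it into the bilinear form $\circ$ together with the generator $(\bar\ell\mid\bar f\bar h+2\bar f)$ of ${\cal C}^\perp$, apply Lemma \ref{Lemma1} to obtain a divisibility in $\mathbb{Z}_4[x]$, and then pin down the quotient by a degree count. Concretely, I would first locate a codeword of the form $(0\mid r\cdot fh)\in{\cal C}$ for the largest possible binary-side reason. From the generator $(\ell\mid fh+2f)$ multiplied by $g$ we get $(\ell g\mid 2fg)\equiv(\ell g\mid 0+2fg)$; combined with $(0\mid 2fh)$ and $(b\mid 0)$, the point is to find the smallest $s(x)\mid(x^\beta-1)$ over $\mathbb{Z}_2$ (equivalently, the right multiple of $h$ up to $\gcd$'s with $b$ and $\ell$) so that $(0\mid s\cdot fh)$ lies in ${\cal C}$; the candidate coming from $\deg(\bar f)$ in Proposition \ref{DualPolynomialDegrees} is $s=\frac{h\gcd(b,\ell)^*}{\gcd(b,\ell g)^*}$ up to reciprocals, i.e. the codeword $(0\mid \frac{b}{\gcd(b,\ell g)}\,fh\cdot\text{(correction)})$ obtained by clearing the $\ell$-contribution exactly as in Proposition \ref{bar_fh}.

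Having produced such a codeword $\textbf{w}=(0\mid w'(x))$ with $w'(x)=c(x)fh$ for an explicit $c(x)\in\mathbb{Z}_2[x]$, orthogonality $(\bar\ell\mid\bar f\bar h+2\bar f)\circ\textbf{w}=0$ forces, via Lemma \ref{Lemma1} applied on the $\mathbb{Z}_4$-side, $(\bar f\bar h+2\bar f)\,c^*\,(fh)^*\equiv 0\pmod{x^\beta-1}$ in $\mathbb{Z}_4[x]$. Reducing this modulo $2$ kills the $2\bar f$ term and, once a degree count shows the product has degree exactly $\beta$, yields $\bar f\bar h\cdot(\text{something})=x^\beta-1$ over $\mathbb{Z}_2$, but I actually want $\bar f$ alone, not $\bar f\bar h$. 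To isolate $\bar f$ I would instead use a codeword whose quaternary part is a multiple of $f$ that is \emph{not} killed by reducing mod $2$ on the relevant component — precisely, use the order-$4$ generator $(\ell\mid fh+2f)$ itself (or $(\ell g\mid 2fg)$ together with it) so that the product $(\bar f\bar h+2\bar f)\cdot(\text{stuff})$ retains information about $\bar f$ at the $2$-adic level; then Lemma \ref{Lemma1} gives a congruence mod $(x^\beta-1)$ in $\mathbb{Z}_4$ whose mod-$2$ reduction, combined with the already-known formula \eqref{HenLift_fh} for $\bar f\bar h$ from Proposition \ref{bar_fh}, lets me divide out $\bar h$ and conclude
\[
\bar f=\frac{(x^\beta-1)\gcd(b,\ell)^*}{f^*h^*\gcd(b,\ell g)^*}\in\mathbb{Z}_2[x].
\]
The degree check uses Proposition \ref{DualPolynomialDegrees}: $\deg(\bar f)=\deg(g)+\deg(\gcd(b,\ell))-\deg(\gcd(b,\ell g))$, which matches $\beta-\deg(f)-\deg(h)-\deg(\gcd(b,\ell g))+\deg(\gcd(b,\ell))$ since $\deg(f)+\deg(g)+\deg(h)=\beta$. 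Finally, as in Proposition \ref{bar_fh}, since $\beta$ is odd the polynomial $\bar f$ is the unique monic divisor of $x^\beta-1$ in $\mathbb{Z}_4[x]$ reducing to the stated binary polynomial, so "is the Hensel lift of" is justified by uniqueness of Hensel lifting \cite[p.73]{Wan}.

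The main obstacle is the isolation step: producing a codeword of ${\cal C}$ whose $\circ$-pairing with the generator of ${\cal C}^\perp$ sees $\bar f$ rather than the coarser $\bar f\bar h$. In Proposition \ref{bar_fh} the multiplication by $2$ (the codeword $(0\mid\frac{b}{\gcd(b,\ell g)}2f)$) is what made the $2\bar f$ term vanish and gave $\bar f\bar h$ directly; here I must avoid that collapse, which means carefully tracking which polynomial multiples of $f$, $h$, $g$ and $b,\ell,\gcd(b,\ell),\gcd(b,\ell g)$ actually occur in ${\cal C}$ (using Corollary \ref{bdiviXSgcd} and the divisibility $b\mid\frac{x^\beta-1}{f}\ell\pmod 2$) and then doing the $2$-adic bookkeeping in $\mathbb{Z}_4[x]$ correctly — the coprimality $\gcd(h,g)=1$ and a Bézout identity $p_1fh+p_2fg=f$ will again be the lever. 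Everything else is the same degree-counting and Hensel-uniqueness routine already used twice above.
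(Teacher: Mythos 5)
You have the right skeleton (pair codewords under $\circ$, invoke Lemma \ref{Lemma1}, fix the constant by the degree count from Proposition \ref{DualPolynomialDegrees}, finish with uniqueness of the Hensel lift — and your degree check is correct), but the step you yourself flag as ``the main obstacle'', isolating $\bar f$ rather than $\bar f\bar h$, is precisely the content of the proof, and your proposal leaves it unresolved. The paper does not pair anything against the dual generator $(\bar{\ell}\mid \bar{f}\bar{h}+2\bar{f})$ and then try to divide out $\bar h$. It isolates $\bar f$ on the \emph{dual} side: since $\bar h$ and $\bar g$ are coprime there are $\bar p_1,\bar p_2$ with $2\bar p_1\bar f\bar h+2\bar p_2\bar f\bar g=2\bar f$, so $(2\bar p_1+\bar p_2\bar g)\star(\bar{\ell}\mid \bar{f}\bar{h}+2\bar{f})=(\bar p_2\bar{\ell}\bar g\mid 2\bar f)\in{\cal C}^\perp$ is a dual codeword whose quaternary part is exactly $2\bar f$. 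On the primal side one needs a codeword with zero binary part and order-four quaternary part: writing $\ell=\gcd(b,\ell)\rho$, $\ell g=\gcd(b,\ell g)\rho\tau_1$, $b=\gcd(b,\ell g)\tau_2$ with $\tau_1,\tau_2$ coprime and $t_1\tau_1+t_2\tau_2=1$, one gets $\left(0\mid \frac{\gcd(b,\ell g)}{\gcd(b,\ell)}(fh+2f)+2t_1fg\right)\in{\cal C}$. Pairing these two codewords, Lemma \ref{Lemma1} applies because the primal binary part is $0$ (so the unknown $\bar\ell$, and the binary part $\bar p_2\bar\ell\bar g$, never enter), and the factor $2$ in $2\bar f$ annihilates the $2f$- and $2fg$-terms, leaving exactly $2\bar f\,\frac{\gcd(b,\ell g)^*}{\gcd(b,\ell)^*}f^*h^*\equiv 0 \pmod{x^\beta-1}$, after which your degree argument closes the proof.

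Note that your guiding intuition is backwards on the key point: the ``collapse'' caused by multiplication by $2$ is not something to avoid but the very tool that removes the unwanted terms — here it is deployed on the dual side (the codeword $2\bar f$), not the primal side. Your alternative route, pairing an order-four primal codeword against $(\bar{\ell}\mid \bar{f}\bar{h}+2\bar{f})$ and then extracting $\bar f$ using the known value of $\bar f\bar h$, is not hopeless (the extra $\bar f\bar h$-terms can be shown to vanish modulo $x^\beta-1$ via Corollary \ref{bdiviXSgcd} and Lemma \ref{B/GCDdiviH}, much as in the later computation of $\bar\ell$), but as written it is only a plan: you exhibit neither the explicit primal codeword — the B\'ezout identity you cite, $p_1fh+p_2fg=f$, is the lever of Proposition \ref{bar_fh}, whereas what is needed here is the B\'ezout identity between $\frac{b}{\gcd(b,\ell g)}$ and $\frac{\ell g}{\rho\gcd(b,\ell g)}$ — nor the $2$-adic bookkeeping that actually isolates $\bar f$. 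So the proposal reproduces the paper's outer routine but has a genuine gap at its center.
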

\begin{proof}
One can factorize in $\mathbb{Z}_2[x]$ the polynomials $b, \ell, \ell g$ in 
the  following way:
\begin{align*}
\ell =&\gcd(b, \ell) \rho,\\
\ell g =&\gcd(b,\ell g)\rho\tau_1,\\
b =& \gcd(b, \ell g) \tau_2,
\end{align*}
where $\tau_1$ and $\tau_2$ are coprime polynomials. 

Hence, there exist 
$t_1,t_2\in\mathbb{Z}_2[x]$ such that $t_1\tau_1+t_2\tau_2=1.$ Then,
$$\gcd(b,\ell g)\rho(t_1\tau_1 + t_2 \tau_2)=\gcd(b,\ell g)\rho,$$
and
$$ t_1\ell g + \rho t_2 b=\frac{\gcd(b,\ell g)}{\gcd(b, \ell)}\ell.$$

Therefore,
{ 
\begin{align*}
\frac{\gcd(b,\ell g)}{\gcd(b, \ell)}\star&(\ell\mid fh+2f)+t_1\star(\ell g\mid 2fg)+\rho t_2\star(b\mid 0)=\\
& \left(0\mid \frac{\gcd(b,\ell g)}{\gcd(b, \ell)}(fh+2f)+t_1 2fg\right)\in{\cal C}.
\end{align*}
}

Since $\bar{h}$ and $\bar{g}$ are coprime, there exist $\bar{p}_1,\bar{p}_2\in\mathbb{Z}_4[x]$ such that $2\bar{p}_1\bar{f}\bar{h}+2\bar{p}_2\bar{f}\bar{g}=2\bar{f}.$ So, $(2\bar{p}_1+\bar{p}_2\bar{g})\star(\bar{\ell}\mid \bar{f}\bar{h} + 2\bar{f})=(\bar{p}_2\bar{\ell}\bar{g}\mid 2\bar{f})\in{\cal C}^\perp.$

Therefore,
{
$$(\bar{p}_2\bar{\ell}\bar{g}\mid 2\bar{f})\circ\left(0\mid \frac{\gcd(b,\ell g)}{\gcd(b, \ell)}(fh+2f)+t_1 2fg\right)=0.$$
}

By Lemma \ref{Lemma1} and, arranging properly, we obtain that
$$2\bar{f}\left(\frac{\gcd(b,\ell g)^*}{\gcd(b, \ell)^*}\right) f^*h^*\equiv 0 \pmod{(x^\beta-1)}$$ 
and
\begin{equation}\label{equ2}
2\bar{f}\left(\frac{\gcd(b,\ell g)^*}{\gcd(b, \ell)^*}\right) f^*h^*= 2\mu(x^\beta-1),
\end{equation}
for some $\mu\in\mathbb{Z}_4[x].$ 

If (\ref{equ2}) holds over $\mathbb{Z}_4$, then it is equivalent to
$${\bar{f}}\left(\frac{\gcd(b,\ell g)^*}{\gcd(b, \ell)^*}\right) {f^*}{h^*}= {\mu}(x^\beta-1)\in\mathbb{Z}_2[x].$$

It is easy to prove that $\left(\frac{\gcd(b,\ell g)^*}{\gcd(b, \ell)^*}\right) {f^*}{h^*}$ divides $(x^\beta-1)$ in $\mathbb{Z}_2[x]$. By Corollary \ref{DualPolynomialDegrees}, $\deg(\bar{f})=\beta-\deg(f)-\deg(h)+\deg(\gcd(b,\ell))-\deg(\gcd(b,\ell g))$, so 
$$\beta=\deg\left({\bar{f}}\left(\frac{\gcd(b,\ell g)^*}{\gcd(b, \ell)^*}\right) {f^*}{h^*}\right)=\deg(x^\beta-1).$$
Hence, we obtain that ${\mu}=1$ and
\begin{equation}\label{HenLift_f}
{\bar{f}}= \frac{(x^\beta-1)\gcd(b, \ell)^*}{\gcd(b,\ell g)^*{f^*}{h^*}}\in\mathbb{Z}_2[x].
\end{equation}
Since $\beta$ is odd and by the uniqueness of the Hensel lift \cite[p.73]{Wan} then $\bar{f}$ is the unique monic polynomial in $\mathbb{Z}_4[x]$ dividing $(x^\beta-1)$ and holding (\ref{HenLift_f}).
\end{proof}

\begin{lemma}\label{B/GCDdiviH}
Let ${\cal C}=\langle (b\mid { 0}), (\ell \mid fh +2f) \rangle$ be a ${\mathbb{Z}_2 {\mathbb{Z}_4}}$-additive cyclic code of type $(\alpha, \beta; \gamma , \delta; \kappa)$, where $fgh=x^\beta-1$. Then, the Hensel lift of $\frac{b}{\gcd(b,\ell g)}$ divides $h$.
\end{lemma}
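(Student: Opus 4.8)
The plan is to work in $\mathbb{Z}_2[x]/(x^\beta-1)$ with the factorization $x^\beta-1 = fhg$ and to identify $\frac{b}{\gcd(b,\ell g)}$ with a divisor of $h$ by comparing divisors of $x^\beta-1$. First I would recall from Corollary~\ref{bdiviXSgcd} that $b$ divides $\frac{x^\beta-1}{f}\gcd(b,\ell)\pmod 2$ and $b$ divides $h\gcd(b,\ell g)\pmod 2$; the second divisibility is the relevant one here. Writing $\tau_2 = \frac{b}{\gcd(b,\ell g)}$, the statement $b \mid h\gcd(b,\ell g)$ says precisely that $\tau_2$ divides $h$ in $\mathbb{Z}_2[x]$, once one checks that $\tau_2$ is a genuine polynomial, i.e.\ that $\gcd(b,\ell g)$ divides $b$, which is immediate.

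So over $\mathbb{Z}_2[x]$ the claim $\tau_2 \mid h$ is essentially Corollary~\ref{bdiviXSgcd}. The substance of Lemma~\ref{B/GCDdiviH} is therefore the passage to the Hensel lift: we must show that the \emph{Hensel lift} of $\tau_2$ over $\mathbb{Z}_4[x]$ divides $h$ in $\mathbb{Z}_4[x]$. Here I would use that $b \mid x^\alpha-1$ forces $b$, and hence its factor $\tau_2 = \frac{b}{\gcd(b,\ell g)}$, to be a divisor of $x^\alpha-1$ in $\mathbb{Z}_2[x]$; in particular $\tau_2$ is a product of distinct irreducible factors (since $\alpha$ is... —actually $\alpha$ need not be odd, but $\tau_2$ divides $x^\beta-1$ by the previous paragraph and $\beta$ \emph{is} odd, so $\tau_2$ is squarefree over $\mathbb{Z}_2$). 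Consequently $\tau_2$ has a well-defined Hensel lift $\widetilde{\tau_2}\in\mathbb{Z}_4[x]$, which is the unique monic divisor of $x^\beta-1$ over $\mathbb{Z}_4$ reducing to $\tau_2$ modulo $2$.

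Now $h$ is itself a monic divisor of $x^\beta-1$ in $\mathbb{Z}_4[x]$ (from $fhg = x^\beta-1$), and since $\beta$ is odd, $h$ is the Hensel lift of its own reduction $\bar h\in\mathbb{Z}_2[x]$. From $\tau_2\mid\bar h$ in $\mathbb{Z}_2[x]$ and the multiplicativity of the Hensel lift on coprime-to-$x^\beta-1$-complement factorizations — more precisely, the fact that distinct irreducible factors of $x^\beta-1$ lift independently (\cite[p.~73]{Wan}) — the Hensel lift of $\bar h$ factors as $\widetilde{\tau_2}\cdot\widetilde{(\bar h/\tau_2)}$, whence $\widetilde{\tau_2}\mid h$ in $\mathbb{Z}_4[x]$. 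I would phrase this last step using uniqueness of the Hensel lift: $\widetilde{\tau_2}\cdot\widetilde{(\bar h/\tau_2)}$ is a monic divisor of $x^\beta-1$ reducing to $\bar h$ mod $2$, hence equals $h$.

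The main obstacle, and the step to write carefully, is the multiplicativity of the Hensel lift: that if $\bar a\bar c$ divides $x^\beta-1$ over $\mathbb{Z}_2$ with $\gcd(\bar a,\bar c)=1$, then the Hensel lift of $\bar a\bar c$ is the product of the Hensel lifts of $\bar a$ and $\bar c$. This is standard for $\beta$ odd (uniqueness of factorization into basic irreducibles over $\mathbb{Z}_4$, \cite[Ch.~13]{Wan}), but it is the one place where one cannot simply quote an earlier result from the excerpt. Everything else is bookkeeping with divisibilities of $x^\beta-1$ and the already-established Corollary~\ref{bdiviXSgcd}.
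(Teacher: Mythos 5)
Your proposal is correct and follows essentially the same route as the paper: extract $\frac{b}{\gcd(b,\ell g)}\mid h \pmod 2$ from Corollary~\ref{bdiviXSgcd}, then use the fact that for odd $\beta$ divisibility among divisors of $x^\beta-1$ over $\mathbb{Z}_2$ is preserved by Hensel lifting (the paper simply quotes this general fact, which you justify via uniqueness of the lift and multiplicativity over coprime factors). No gap to report.
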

\begin{proof}
In general, if $a\mid b\mid x^\beta-1$ over $\mathbb{Z}_2[x]$ with $\beta$ odd, then the Hensel lift of $a$ divides the Hensel lift of $b$ that divides $x^\beta-1$ over $\mathbb{Z}_4[x]$. Then, by Corollary \ref{bdiviXSgcd}, the result follows.
\end{proof}

In the family of $\mathbb{Z}_2\mathbb{Z}_4$-additive cyclic codes there is a particular class when the polynomials $b$ and $\gcd(b, \ell g)$ are the same. Applying \mbox{Lemma \ref{generators_Cb}} to this class we obtain that ${\cal C}_b$ has only two generators, $\langle (b\mid 0), (0\mid 2f)\rangle$, instead of three, $\langle (b\mid { 0}), (\ell g \mid  2fg), ({ 0}\mid  2fh) \rangle$. So, we have to take care of this class of $\mathbb{Z}_2\mathbb{Z}_4$-additive cyclic codes.

\begin{proposition}
Let ${\cal C}=\langle (b\mid { 0}), (\ell \mid fh +2f) \rangle$ be a 
non-separable ${\mathbb{Z}_2 {\mathbb{Z}_4}}$-additive cyclic code of type 
$(\alpha, \beta; \gamma , \delta; \kappa)$, where $fgh=x^\beta-1$, and with dual 
code ${\cal C}^\perp = \langle (\bar{b}\mid{ 0}), (\bar{\ell} \mid 
\bar{f}\bar{h} +2\bar{f}) \rangle ,$ where $\bar{f}\bar{g}\bar{h} = x^\beta -1.$ 
Let $\rho=\frac{\ell }{\gcd(b,\ell)}$. 
Then, 
$$\bar{\ell}=\frac{x^\alpha-1}{b^*}\left(\frac{\gcd(b,\ell g)^*}{\gcd(b, \ell)^*} x^{\mathfrak{m} -\deg(f)}\mu_1+\frac{b^*}{\gcd(b,\ell g)^*} x^{\mathfrak{m} 
-\deg(fh)}\mu_2\right),$$ where
$$\left\{
 \begin{tabular}{l l l}
   $\mu_1 =x^{\deg(\ell)}(\rho^*)^{-1}\mod\left(\frac{b^*}{\gcd(b,\ell g)^*}\right),$ \\
   $\mu_2 = x^{\deg(\ell)}(\rho^*)^{-1}\mod\left(\frac{b^*}{\gcd(b,\ell)^*}\right).$ \\
\end{tabular}
  \right.$$
 
\end{proposition}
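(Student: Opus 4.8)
The plan is to recover $\bar{\ell}$ from the two relations that the mixed generator $(\bar{\ell}\mid\bar{f}\bar{h}+2\bar{f})$ of ${\cal C}^\perp$ must satisfy, namely
$$(b\mid 0)\circ(\bar{\ell}\mid\bar{f}\bar{h}+2\bar{f})=0\qquad\text{and}\qquad(\ell\mid fh+2f)\circ(\bar{\ell}\mid\bar{f}\bar{h}+2\bar{f})=0,$$
together with the bound $\deg(\bar{\ell})<\deg(\bar{b})$ and the explicit values of $\bar{b}$, $\bar{f}\bar{h}$, $\bar{f}$ and their degrees already computed in Propositions~\ref{bar(b)}, \ref{bar_fh}, \ref{bar_f} and \ref{deg_bar_b}. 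Since $\circ$ is $\mathbb{Z}_4[x]$-bilinear and $(b\mid 0)$, $(\ell\mid fh+2f)$ generate ${\cal C}$ as a $\mathbb{Z}_4[x]$-module, these two relations together are equivalent to $(\bar{\ell}\mid\bar{f}\bar{h}+2\bar{f})\in{\cal C}^\perp$.

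From the first relation one reads off the shape of $\bar{\ell}$. The $\mathbb{Z}_4$-component of $(b\mid 0)$ is zero, so Lemma~\ref{Lemma1} applies and yields $b\,\bar{\ell}^{*}\equiv 0\pmod{x^\alpha-1}$ over $\mathbb{Z}_2$; since $b\mid x^\alpha-1$ this forces $\frac{x^\alpha-1}{b}$ to divide $\bar{\ell}^{*}$, and passing to reciprocal polynomials, $\frac{x^\alpha-1}{b^{*}}$ divides $\bar{\ell}$. Write $\bar{\ell}=\frac{x^\alpha-1}{b^{*}}\,\omega$. By Proposition~\ref{deg_bar_b}, $\deg(\bar{b})=\alpha-\deg(\gcd(b,\ell))$, so the bound $\deg(\bar{\ell})<\deg(\bar{b})$ becomes $\deg(\omega)<\deg(b)-\deg(\gcd(b,\ell))=\deg\!\big(b/\gcd(b,\ell)\big)$; hence it is enough to identify $\omega$ modulo $b^{*}/\gcd(b,\ell)^{*}$, and the polynomial displayed in the statement is precisely such a representative.

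The real work is the second relation, where no component of either vector vanishes, so Lemma~\ref{Lemma1} is not available and one must expand $\circ$ by hand. Using $\theta_{\frac{\mathfrak{m}}{\alpha}}(x^\alpha)=\frac{x^{\mathfrak{m}}-1}{x^\alpha-1}$, $\theta_{\frac{\mathfrak{m}}{\beta}}(x^\beta)=\frac{x^{\mathfrak{m}}-1}{x^\beta-1}$, the identity $(\bar{f}\bar{h}+2\bar{f})^{*}=\bar{f}^{*}\bar{h}^{*}+2x^{\deg(\bar{h})}\bar{f}^{*}$, and reduction modulo $x^{\mathfrak{m}}-1$ as in the proof of Lemma~\ref{Lemma1}, one sees that the purely quaternary ``non-$2$'' part $fh\,\theta_{\frac{\mathfrak{m}}{\beta}}(x^\beta)\,x^{\mathfrak{m}-1-\deg(\bar{f}\bar{h})}\,\bar{f}^{*}\bar{h}^{*}$ of the expansion vanishes modulo $x^{\mathfrak{m}}-1$: replacing $\bar{f}\bar{h}$ by the Hensel lift of Proposition~\ref{bar_fh} and using $fgh=x^\beta-1$ turns $fh\,(\bar{f}\bar{h})^{*}$ into $(x^\beta-1)\frac{h\,\gcd(b,\ell g)}{b}$, a genuine multiple of $x^\beta-1$ because $b\mid h\,\gcd(b,\ell g)\pmod 2$ by Corollary~\ref{bdiviXSgcd} (Lemma~\ref{B/GCDdiviH} being what makes this manipulation legitimate over $\mathbb{Z}_4$). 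Hence $\circ=0$ is divisible by $2$ and reduces to a single congruence over $\mathbb{Z}_2[x]$,
$$\ell\,\theta_{\frac{\mathfrak{m}}{\alpha}}(x^\alpha)\,x^{\mathfrak{m}-1-\deg(\bar{\ell})}\,\bar{\ell}^{*}\equiv fh\,\theta_{\frac{\mathfrak{m}}{\beta}}(x^\beta)\,x^{\mathfrak{m}-1-\deg(\bar{f})}\,\bar{f}^{*}+f\,\theta_{\frac{\mathfrak{m}}{\beta}}(x^\beta)\,x^{\mathfrak{m}-1-\deg(\bar{f}\bar{h})}\,\bar{f}^{*}\bar{h}^{*}\pmod{x^{\mathfrak{m}}-1},$$
in which $\bar{\ell}$ occurs only on the left and the right-hand side becomes fully explicit once $\bar{f}$ and $\bar{f}\bar{h}$ are replaced by their values from Propositions~\ref{bar_fh} and \ref{bar_f}.

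Finally one solves this congruence. Reducing modulo $x^\alpha-1$ removes the factor $\theta_{\frac{\mathfrak{m}}{\alpha}}(x^\alpha)$ on the left, and matching the ``$\alpha$-part'' and ``$\beta$-part'' as in the proof of Lemma~\ref{Lemma1} turns the right-hand side into two explicit summands; inserting $\bar{\ell}=\frac{x^\alpha-1}{b^{*}}\omega$ and the factorizations $\ell=\gcd(b,\ell)\rho$, $\ell g=\gcd(b,\ell g)\rho\tau_1$, $b=\gcd(b,\ell g)\tau_2$ with $\gcd(\tau_1,\tau_2)=1$ from the proof of Proposition~\ref{bar_f}, and cancelling the common factor $\gcd(b,\ell)^{*}$, leaves a linear congruence of the shape $\rho^{*}x^{\deg(\ell)}\,\omega^{*}\equiv(\text{known})\pmod{b^{*}}$. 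Because $\gcd(\rho,\,b/\gcd(b,\ell))=1$, the polynomial $\rho^{*}$ is invertible modulo $b^{*}/\gcd(b,\ell)^{*}$ and modulo its divisor $b^{*}/\gcd(b,\ell g)^{*}$; the right-hand side splits naturally into the contribution of $\bar{f}^{*}$ and that of $\bar{f}^{*}\bar{h}^{*}$, and solving for $\omega$ along these two nested moduli produces exactly the two terms with $\mu_1=x^{\deg(\ell)}(\rho^{*})^{-1}\bmod(b^{*}/\gcd(b,\ell g)^{*})$ and $\mu_2=x^{\deg(\ell)}(\rho^{*})^{-1}\bmod(b^{*}/\gcd(b,\ell)^{*})$, the prefactors $\frac{\gcd(b,\ell g)^{*}}{\gcd(b,\ell)^{*}}$, $\frac{b^{*}}{\gcd(b,\ell g)^{*}}$ and the powers $x^{\mathfrak{m}-\deg(f)}$, $x^{\mathfrak{m}-\deg(fh)}$ being the bookkeeping of the reciprocations and of the reductions $x^{\mathfrak{m}}\equiv 1$. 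A final degree check confirms $\deg(\bar{\ell})<\deg(\bar{b})$, so this $\bar{\ell}$ is the admissible one. I expect the main obstacle to be this last stretch: carrying the reciprocal-polynomial identities correctly and keeping the two moduli $x^\alpha-1$ and $x^\beta-1$ apart inside $\mathbb{Z}_2[x]/(x^{\mathfrak{m}}-1)$ while isolating the dependence on $\bar{\ell}$ without degree or sign slips.
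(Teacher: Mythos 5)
Your skeleton is the paper's: orthogonality of $(\bar{\ell}\mid\bar{f}\bar{h}+2\bar{f})$ against codewords of ${\cal C}$ via $\circ$, Lemma~\ref{Lemma1} applied to the pairing with $(b\mid 0)$ to get $\bar{\ell}=\frac{x^\alpha-1}{b^*}\omega$, substitution of the known $\bar{f}\bar{h}$ and $\bar{f}$ from Propositions~\ref{bar_fh} and \ref{bar_f}, killing of the order-$4$ addend via Corollary~\ref{bdiviXSgcd} and Lemma~\ref{B/GCDdiviH}, and invertibility of $\rho^*$ from $\gcd(\rho,b/\gcd(b,\ell))=1$. Where you genuinely deviate is the endgame: the paper pairs the dual generator with a \emph{third} codeword, $\tau\star(\ell\mid fh+2f)$ with $\tau=\gcd(b,\ell g)/\gcd(b,\ell)$, chosen so that the $\bar{f}$-addend vanishes; this first pins $\lambda$ modulo $b^*/\gcd(b,\ell g)^*$ and forces the split $\lambda=\lambda_1+\lambda_2$, which is exactly where the two terms $\mu_1,\mu_2$ come from. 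You use only the two generators, which is legitimate (orthogonality to them and all their shifts is orthogonality to ${\cal C}$), and your explicit remark that it suffices to determine $\omega$ modulo $b^*/\gcd(b,\ell)^*$ --- since changing $\bar{\ell}$ by multiples of $\bar{b}$ does not change the generated code --- is correct and is left implicit in the paper. The single congruence coming from $(\ell\mid fh+2f)$ does determine $\omega$ in that quotient, so your two-relation plan is viable and arguably cleaner.

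The last stretch, however, is not yet a proof and has three concrete problems. First, the modulus: the clean move is not ``reducing modulo $x^\alpha-1$'' (note $\theta_{\mathfrak{m}/\alpha}(x^\alpha)\equiv\mathfrak{m}/\alpha$ there, and the $\beta$-addends do not simplify), but factoring $\frac{(x^{\mathfrak{m}}-1)\gcd(b,\ell)^*}{b^*}$ out of all three surviving addends, as in the paper's step (\ref{2smth}); the resulting linear congruence then lives modulo $b^*/\gcd(b,\ell)^*$, not modulo $b^*$ as you wrote. Second, with your orientation of $\circ$ (the ${\cal C}$-codeword in the first slot) it is the dual polynomials that get reciprocated, so the unknown enters as $\omega^*$ and the natural modulus is the unstarred $b/\gcd(b,\ell)$; you must either handle the final (degree-sensitive) reciprocation or, as the paper does, put $(\bar{\ell}\mid\bar{f}\bar{h}+2\bar{f})$ in the first slot so that $\lambda$ appears unstarred. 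Third, ``solving along the two nested moduli'' is not an argument: your single congruence yields
\begin{equation*}
\omega\equiv(\rho^*)^{-1}\left(\frac{\gcd(b,\ell g)^*}{\gcd(b,\ell)^*}\,x^{\mathfrak{m}-\deg(f)+\deg(\ell)}+\frac{b^*}{\gcd(b,\ell g)^*}\,x^{\mathfrak{m}-\deg(fh)+\deg(\ell)}\right)\pmod{\frac{b^*}{\gcd(b,\ell)^*}},
\end{equation*}
and to reach the statement you must still check that the displayed expression with $\mu_1,\mu_2$ lies in this residue class. That is a short but indispensable computation: replacing $\mu_1$ by $x^{\deg(\ell)}(\rho^*)^{-1}$ changes the first term by a multiple of $\frac{\gcd(b,\ell g)^*}{\gcd(b,\ell)^*}\cdot\frac{b^*}{\gcd(b,\ell g)^*}=\frac{b^*}{\gcd(b,\ell)^*}$, replacing $\mu_2$ changes the second by a multiple of the same polynomial, and $x^{\mathfrak{m}}\equiv 1$ modulo $b^*/\gcd(b,\ell)^*$. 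Write that verification out (or adopt the paper's auxiliary codeword, which makes the two-term shape appear automatically) and your argument closes.
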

\begin{proof}
In order to calculate $\bar{\ell}$, by using $\circ$, we are going to operate $(\bar{\ell} \mid \bar{f}\bar{h} +2\bar{f})$ by three different codewords of $\cal{C}$. The result of these operations is $0$ modulo $x^\mathfrak{m}-1$.

First, consider $(\bar{\ell} \mid \bar{f}\bar{h} +2\bar{f})\circ (b\mid { 0}) = 0$
. By Lemma 
\ref{Lemma1}, $\bar{\ell}b^*\equiv 0\pmod{(x^\alpha-1)}$ and, for some 
$\lambda\in\mathbb{Z}_2[x],$ we have 
that $\bar{\ell}=\frac{x^\alpha-1}{b^*}\lambda.$

Second, consider $\tau=\frac{\gcd(b,\ell g)}{\gcd(b,\ell)}$ and compute $(\bar{\ell} 
\mid \bar{f}\bar{h} +2\bar{f})\circ (\tau\ell \mid \tau fh + 2\tau f).$ Let $t=\deg(\tau)$ and note that $(fh+2f)^*= f^*h^*+2x^{\deg(h)}f^*$. We obtain that 
\begin{align}
 0=(\bar{\ell} \mid \bar{f}&\bar{h} +2\bar{f})\circ \left(\tau\ell \mid \tau fh + 2\tau f\right)=\nonumber \\
 &2\bar{\ell} \theta_{\frac{\mathfrak{m}}{\alpha}}(x^\alpha)x^{\mathfrak{m}-\deg(\ell)-1-t}\tau^*\ell^*\nonumber \\
 &+  \bar{f}\bar{h}\theta_{\frac{\mathfrak{m}}{\beta}}(x^\beta) x^{\mathfrak{m}-\deg(fh)-1-t}\tau^* f^*h^*\label{bar(fh)fh0}\\
 &+ 2\bar{f}\bar{h}\theta_{\frac{\mathfrak{m}}{\beta}}(x^\beta)x^{\mathfrak{m}-\deg(f)-1-t}\tau^* f^*\nonumber \\
 &+ 2\bar{f} \theta_{\frac{\mathfrak{m}}{\beta}}(x^\beta)x^{\mathfrak{m}-\deg(fh)-1-t}\tau^* f^*h^* \mod(x^{\mathfrak{m}}-1).\nonumber 
\end{align}

Apply Proposition \ref{xnm=xntheta} to each addend and 
$\bar{\ell}=\frac{x^\alpha-1}{b^*}\lambda.$ In addend (\ref{bar(fh)fh0}), by 
Proposition \ref{bar_fh}, we may replace $\bar{f}\bar{h}$ by the Hensel lift of 
$\frac{(x^\mathfrak{\beta}-1)\gcd(b,\ell g)^*}{f^*b^*}$. The Hensel lift of 
$(x^\beta-1)$ and $f^*\pmod 2$ are the same polynomials $(x^\beta-1)$ and $f^*$. 
Moreover, by Lemma \ref{B/GCDdiviH}, the addend (\ref{bar(fh)fh0}) is 0 modulo 
$(x^\mathfrak{m}-1).$ Therefore, by Proposition \ref{bar_fh} and Proposition 
\ref{bar_f}, we get that
\begin{align}
 0=(\bar{\ell} \mid \bar{f}&\bar{h} +2\bar{f})\circ \left(\tau\ell \mid \tau fh + 2\tau f\right)=\nonumber \\
 &2\frac{(x^\mathfrak{m}-1)}{b^*}\lambda x^{\mathfrak{m}-\deg(\ell)-1-t}\tau^* \ell^*\nonumber \\ 
 &+ 2\frac{(x^\mathfrak{m}-1)\gcd(b,\ell g)^*}{f^*b^*}x^{\mathfrak{m}-\deg(f)-1-t}\tau^* f^*\nonumber \\
 &+ 2\frac{(x^\mathfrak{m}-1)\gcd(b,\ell)^*}{f^*h^*\gcd(b,\ell g)^*}x^{\mathfrak{m}-\deg(fh)-1-t}\tau^* f^*h^*\mod(x^{\mathfrak{m}}-1)\label{bar(f)fh0}.
\end{align}

Clearly, the addend (\ref{bar(f)fh0}) is 0 modulo $(x^\mathfrak{m}-1)$. Since $\tau=\frac{\gcd(b,\ell g)}{\gcd (b, \ell)}$, we have that $(\bar{\ell} \mid \bar{f}\bar{h} +2\bar{f})\circ \left(\tau\ell \mid \tau fh + 2\tau f\right)$ is equal to
\begin{equation}\label{2smth}
2\frac{(x^\mathfrak{m}-1)\gcd(b,\ell g)^*}{b^*}\left(\lambda x^{\mathfrak{m}-\deg(\ell)-1-t}\rho^* + x^{\mathfrak{m}-\deg(f)-1-t}\tau^* \right)\equiv 0\pmod{(x^\mathfrak{m}-1)}.
\end{equation}
This is equivalent, over $\mathbb{Z}_2$, to\\
{
\begin{equation*}
\frac{(x^\mathfrak{m}-1)\gcd(b,\ell g)^*}{b^*}\left(\lambda x^{\mathfrak{m}-\deg(\ell)-1-t}\rho^* + x^{\mathfrak{m}-\deg(f)-1-t}\tau^* \right)\equiv 0\pmod{(x^\mathfrak{m}-1)}. 
\end{equation*}
}
Then,
\begin{equation}\label{lambdamodxm_neq_1}
\left(\lambda x^{\mathfrak{m}-\deg(\ell)-1-t}\rho^* + x^{\mathfrak{m}-\deg(f)-1-t}\tau^* \right)\equiv 0\pmod{(x^\mathfrak{m}-1)},
\end{equation}
or
{
\begin{equation}\label{lambdamodb_neq_1}
\left(\lambda x^{\mathfrak{m}-\deg(\ell)-1-t}\rho^* + x^{\mathfrak{m}-\deg(f)-1-t}\tau^* \right)\equiv 0 \!\!\pmod{\!\left(\frac{b^*}{\gcd(b,\ell g)^*}\right)}.
\end{equation}
}
Since $(\frac{b^*}{\gcd(b,\ell g)^*})$ divides $(x^ \mathfrak{m}-1)$, then (\ref{lambdamodxm_neq_1}) implies (\ref{lambdamodb_neq_1}). 

The greatest common divisor between $\rho$ and $\left(\frac{b}{\gcd(b,\ell g)}\right)$ is $1$, then $\rho^*$ is invertible modulo $\left(\frac{b^*}{\gcd(b,\ell g)^*}\right).$ Thus,
$$\lambda= \tau^*  x^{\mathfrak{m} -\deg(f)+\deg(\ell 
)}(\rho^*)^{-1}\mod\left(\frac{b^*}{\gcd(b,\ell g)^*}\right).$$

Let  $\lambda_1=\tau^* x^{\mathfrak{m} -\deg(f)+\deg(\ell 
)}(\rho^*)^{-1}\mod\left(\frac{b^*}{\gcd(b,\ell g)^*}\right)$. Then $\lambda=\lambda_1+\lambda_2$ with $\lambda_2\equiv 0 \pmod{\left(\frac{b^*}{\gcd(b,\ell g)^*}\right)}.$

Finally, we compute $(\bar{\ell} \mid \bar{f}\bar{h} +2\bar{f})\circ (\ell \mid fh+2f).$ 
\begin{align}
 0=(\bar{\ell} \mid \bar{f}&\bar{h} +2\bar{f})\circ \left(\ell \mid fh + 2 f\right)=\nonumber \\
 &2\bar{\ell} \theta_{\frac{\mathfrak{m}}{\alpha}}(x^\alpha)x^{\mathfrak{m}-\deg(\ell)-1}\ell^*\nonumber \\
 &+  \bar{f}\bar{h}\theta_{\frac{\mathfrak{m}}{\beta}}(x^\beta) x^{\mathfrak{m}-\deg(fh)-1} f^*h^*\label{bar(fh)fh1}\\
 &+ 2\bar{f}\bar{h}\theta_{\frac{\mathfrak{m}}{\beta}}(x^\beta)x^{\mathfrak{m}-\deg(f)-1} f^*\nonumber \\
 &+ 2\bar{f} \theta_{\frac{\mathfrak{m}}{\beta}}(x^\beta)x^{\mathfrak{m}-\deg(fh)-1} f^*h^*\mod(x^{\mathfrak{m}}-1).\nonumber 
\end{align}
Apply Proposition \ref{xnm=xntheta} to each addend. By Lemma \ref{B/GCDdiviH} and replacing $\bar{f}\bar{h}$ by the Hensel lift of $\frac{(x^\mathfrak{\beta}-1)\gcd(b,\ell g)^*}{f^*b^*}$, then the addend (\ref{bar(fh)fh1}) is $0 \mod(x^\mathfrak{m} -1)$ and, by Proposition \ref{bar_fh} and Proposition \ref{bar_f}, $ (\bar{\ell} \mid \bar{f}\bar{h} +2\bar{f})\circ \left(\ell \mid fh + 2 f\right)$ is equal to 

\begin{align*}
2\frac{(x^\mathfrak{m}-1)}{b^*}(\lambda_1+\lambda_2) x^{\mathfrak{m}-\deg(\ell)-1}\ell^*
 &+ 2\frac{(x^\mathfrak{m}-1)\gcd(b,\ell g)^*}{b^*} x^{\mathfrak{m}-\deg(f)-1}\\
 &+ 2\frac{(x^\mathfrak{m}-1)\gcd(b,\ell)^*}{\gcd(b,\ell g)^*}x^{\mathfrak{m}-\deg(fh)-1}
 \equiv 0\pmod{(x^\mathfrak{m}-1)}.
\end{align*}

Since $\lambda_1 = \tau^* x^{\mathfrak{m} -\deg(f)+\deg(\ell)}(\rho^*)^{-1}\mod\left(\frac{b^*}{\gcd(b,\ell g)^*}\right)$, we have that $$2\frac{(x^\mathfrak{m}-1)}{b^*}\lambda_1 x^{\mathfrak{m}-\deg(\ell)-1}\ell^*+ 2\frac{(x^\mathfrak{m}-1)\gcd(b,\ell g)^*}{b^*}x^{\mathfrak{m}-\deg(f)-1}\equiv 0\pmod{(x^{\mathfrak{m}}-1)}.$$
Therefore, we obtain that 

$$ 2\frac{(x^\mathfrak{m}-1)}{b^*}\lambda_2 x^{\mathfrak{m}-\deg(\ell)-1}\ell^*
 + 2\frac{(x^\mathfrak{m}-1)\gcd(b,\ell)^*}{\gcd(b,\ell g)^*}x^{\mathfrak{m}-\deg(fh)-1}\equiv 0\pmod{(x^\mathfrak{m}-1)},$$ 
and then
$$ 2\frac{(x^\mathfrak{m}-1)\gcd(b, \ell)^*}{b^*}\left(\lambda_2 x^{\mathfrak{m}-\deg(\ell)-1}\rho^*
 + \frac{b^*}{\gcd(b,\ell g)^*}x^{\mathfrak{m}-\deg(fh)-1}\right)\equiv 0\pmod{(x^\mathfrak{m}-1)}.$$ 
Arguing similar to the calculation of $\lambda$ in (\ref{2smth}), we obtain that
$$ \lambda_2=\frac{b^*}{\gcd(b,\ell g)^*} x^{\mathfrak{m} -\deg(fh)+\deg(\ell)}(\rho^*)^{-1}\mod\left(\frac{b^*}{\gcd(b,\ell)^*}\right).$$
Now, considering the values of $\lambda_1$ and $\lambda_2$ and defining properly $\mu_1$ and $\mu_2$ we obtain the expected result.
\end{proof}

We summarize the previous results in the next theorem.
\begin{theorem}\label{DualPolynomials}
Let ${\cal C}=\langle (b\mid { 0}), (\ell \mid fh +2f) \rangle$ be a 
\mbox{${\mathbb{Z}_2 {\mathbb{Z}_4}}$-additive} cyclic code of type $(\alpha, 
\beta; \gamma , \delta; \kappa)$ , where $fgh=x^\beta-1$, and with dual code 
${\cal C}^\perp = \langle (\bar{b}\mid { 0}), (\bar{\ell} \mid \bar{f}\bar{h} 
+2\bar{f}) \rangle ,$ where $\bar{f}\bar{g}\bar{h} = x^\beta -1.$ Let 
$\rho=\frac{\ell }{\gcd(b,\ell)}.$
Then, 
\begin{center}
\begin{enumerate}
\item $\bar{b} = \frac{x^\alpha -1}{(gcd(b,\ell))^*}\in\mathbb{Z}_2[x],$
\item $\bar{f}\bar{h}$ is the Hensel lift of the polynomial \mbox{$\frac{(x^\beta-1)\gcd(b,\ell g)^*}{f^*b^*}\in\mathbb{Z}_2[x].$}
\vspace*{1mm}
\item $\bar{f}$ is the Hensel lift of the polynomial \mbox{$\frac{(x^\beta-1)\gcd(b,\ell)^*}{f^*h^*\gcd(b,\ell g)^*}\in\mathbb{Z}_2[x].$}
\vspace*{1mm}
\item $\bar{\ell}=\frac{x^\alpha-1}{b^*}\left(\frac{\gcd(b,\ell g)^*}{\gcd(b, \ell)^*} x^{\mathfrak{m} -\deg(f)}\mu_1+\frac{b^*}{\gcd(b,\ell g)^*} x^{\mathfrak{m} 
-\deg(fh)}\mu_2\right)\in\mathbb{Z}_2[x],$ where
$$\left\{
 \begin{tabular}{l l l}
   $\mu_1=x^{\deg(\ell)}(\rho^*)^{-1}\mod\left(\frac{b^*}{\gcd(b,\ell g)^*}\right)$, \\
   $\mu_2=x^{\deg(\ell)}(\rho^*)^{-1}\mod\left(\frac{b^*}{\gcd(b,\ell)^*}\right).$ \\
\end{tabular}
  \right.$$
\end{enumerate}
\end{center}

\end{theorem}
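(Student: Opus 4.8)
The plan is to assemble the four assertions from the chain of results immediately preceding the theorem, each of which has already been established in isolation; the only remaining points are to read off items~(1)--(3) and to dispose of the degenerate (separable) case of item~(4). Item~(1), $\bar b = (x^\alpha-1)/\gcd(b,\ell)^*$, is Proposition~\ref{bar(b)}: one uses $(\bar b\mid 0)\in{\cal C}^\perp$, applies $\circ$ against both generators of ${\cal C}$, invokes Lemma~\ref{Lemma1} to get $b\bar b^*\equiv\ell\bar b^*\equiv 0\pmod{x^\alpha-1}$ hence $\gcd(b,\ell)\bar b^*\equiv 0$, and fixes the degree with Proposition~\ref{DualPolynomialDegrees}. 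Items~(2) and~(3) are Propositions~\ref{bar_fh} and~\ref{bar_f}: in each case one exhibits a codeword of ${\cal C}$ supported only on the $\mathbb{Z}_4$ block, writes its $\circ$-orthogonality with $(\bar\ell\mid\bar f\bar h+2\bar f)$, reduces modulo $2$ so the identity lives in $\mathbb{Z}_2[x]$, applies Lemma~\ref{Lemma1} together with the divisibility statements of Corollary~\ref{bdiviXSgcd}, matches degrees via Proposition~\ref{DualPolynomialDegrees} to force the remaining unit constant to be $1$, and finally invokes the uniqueness of the Hensel lift (legitimate since $\beta$ is odd) to conclude that $\bar f\bar h$, resp. $\bar f$, is the stated lift.

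For item~(4), when ${\cal C}$ is non-separable the formula for $\bar\ell$ together with the definitions of $\mu_1,\mu_2$ is exactly the proposition stated just before this theorem: one evaluates $(\bar\ell\mid\bar f\bar h+2\bar f)$ under $\circ$ against $(b\mid 0)$, against $(\tau\ell\mid\tau fh+2\tau f)$ with $\tau=\gcd(b,\ell g)/\gcd(b,\ell)$, and against $(\ell\mid fh+2f)$, each time substituting the expressions for $\bar f\bar h$ and $\bar f$ obtained in~(2)--(3) and using Lemma~\ref{B/GCDdiviH} to annihilate the spurious addends; the first operation gives $\bar\ell=\frac{x^\alpha-1}{b^*}\lambda$, the second pins $\lambda$ down modulo $b^*/\gcd(b,\ell g)^*$ (where $\rho^*$ is invertible since $\rho$ and $b/\gcd(b,\ell g)$ are coprime), and the third determines the leftover summand modulo $b^*/\gcd(b,\ell)^*$. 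If instead ${\cal C}$ is separable, then $\ell=0$, so $\rho=0$ and $\mu_1,\mu_2$ are not defined; but then ${\cal C}^\perp$ is also separable, so $\bar\ell=0$, and one checks directly that items~(1)--(3) collapse (using $\gcd(b,0)=b$) to the generators furnished by the separable proposition above, so nothing more is needed.

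The only genuine work, and the step I expect to be the main obstacle, is item~(4): carefully tracking the shift exponents of the form $\mathfrak{m}-\deg(\cdot)-1$ (and the extra $-t$ with $t=\deg\tau$) as they pass through the bilinear map $\circ$, verifying that precisely the intended addends vanish modulo $x^{\mathfrak{m}}-1$ after the Hensel-lift substitutions --- this is where Lemma~\ref{B/GCDdiviH} and the coprimality of the cofactors $\tau_1,\tau_2$ are indispensable --- and justifying the passage from a congruence modulo $x^{\mathfrak{m}}-1$ to one modulo $b^*/\gcd(b,\ell g)^*$, resp. $b^*/\gcd(b,\ell)^*$, which is legitimate because each of these polynomials divides $x^{\mathfrak{m}}-1$. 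By contrast items~(1)--(3) are routine once Lemma~\ref{Lemma1} and the degree formulas of Proposition~\ref{DualPolynomialDegrees} are available.
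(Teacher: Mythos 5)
Your proposal matches the paper exactly: the theorem is simply a summary of Propositions \ref{bar(b)}, \ref{bar_fh}, \ref{bar_f} and the preceding proposition on $\bar{\ell}$, each proved by the $\circ$-orthogonality argument with Lemma \ref{Lemma1}, the degree formulas of Proposition \ref{DualPolynomialDegrees}, and uniqueness of the Hensel lift, just as you describe. Your explicit treatment of the separable case (where $\rho=0$ and the $\bar{\ell}$-proposition does not apply, but $\gcd(b,0)=b$ makes items (1)--(3) collapse to the separable formulas) is a small point the paper leaves implicit, and is a welcome addition.
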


Note that from Theorem \ref{DualPolynomials} and Theorem \ref{SpanningSetTh} one can easily compute the minimal spanning set of the dual code ${\cal C}^\perp$ as a $\mathbb{Z}_4$-module, and use the encoding method for ${\mathbb{Z}_2 {\mathbb{Z}_4}}$-additive cyclic codes described in \cite{Abu}.

\section{Examples}

As a simple example, consider the $\mathbb{Z}_2\mathbb{Z}_4$-additive cyclic code ${\cal C}_1 = 
\langle (x-1 \mid (x^2+x+1) + 2)\rangle$ of type $(3,3;2,1;2)$.  We have that 
$ b=x^3-1, \ell=(x-1),f=1$ and $h=x^2+x+1.$

The generator matrix (\cite{AddDual}) is
\[
G=\left(
\begin{array}{l|r}
 101&200\\
 011&220\\
 \hline
 000&111\\
\end{array}
 \right)
\]

Then, applying the formulas of Theorem \ref{DualPolynomials} we have
$\bar{b}=x^2+x+1, 
\bar{\ell}=x,\bar{f}\bar{h}=x-1,$ and $\bar{f}=x-1.$
Therefore,
${\cal C}_1^\perp=\langle(x^2+x+1\mid 0),(x\mid (x-1)+2(x-1))\rangle,$ is of 
type $(3,3;1,2;1)$ and has generator matrix
\[
H=\left(
\begin{array}{l|r}
 111&000\\
  \hline
 100&310\\
 001&301\\
\end{array}
 \right).
\]
 
\bigskip
In order to determine some cyclic codes with good parameters, we will consider 
some optimal codes with respect to the minimum distance. Applying the classical 
Singleton bound \cite{sing} to a $\mathbb{Z}_2\mathbb{Z}_4$-additive code ${\cal C}$ of type 
$(\alpha,\beta; \gamma,\delta;\kappa)$ and minimum distance $d$, the following 
bound is obtained:
\begin{equation}\label{MDSbound}
\frac{d-1}{2}\leq \frac{\alpha}{2}+\beta-\frac{\gamma}{2}-\delta.
\end{equation}
According to \cite{MDS}, a code meeting the bound (\ref{MDSbound}) is called maximum 
distance separable with respect to the Singleton bound, briefly MDSS.

By \cite[Theorem 19]{Abu} it is known that ${\cal C}=\langle (b\mid { 0}), 
(\ell \mid fh +2f) \rangle$ with $b=x-1$, $\ell=1$ and $f=h=1$ is an MDSS code 
of type $(\alpha,\beta; \alpha-1, \beta; \alpha-1)$. Applying Theorem 
\ref{DualPolynomials} to compute the dual code of ${\cal C}$ one obtain that 
${\cal C}^\perp = \langle (\bar{b}\mid { 0}), (\bar{\ell} \mid \bar{f}\bar{h} 
+2\bar{f}) \rangle$ with $\bar{b}=x^\alpha-1$, $\bar{\ell}=\theta_\alpha (x)$, 
$\bar{f}=\theta_\beta (x)$ and $\bar{h}=x-1$, which is also an MDSS code. In 
fact, the binary image of ${\cal C}$ is the set of all even weight vectors and 
the binary image of ${\cal C}^\perp$ is the repetition code. Moreover, these are 
the only MDSS $\mathbb{Z}_2\mathbb{Z}_4$-additive codes with more than one codeword and minimum 
distance $d>1$, as can be seen in \cite{MDS}.

\bigskip
Finally, we are going to see a pair of examples of self-dual 
$\mathbb{Z}_2\mathbb{Z}_4$-additive cyclic codes, giving the generators and type of these 
codes.

\begin{center}
\begin{tabular}{||p{10cm}|c||}
\hline\hline
Generators & Type  \\
\hline
$b=x^{10} + x^8 + x^7 + x^3 + x + 1, \ell=x^6 + x^4 + x + 1, fh=y^4 + 2y^3 + 
3y^2 + y + 1, f= 1$  & ( 14, 7; 8, 3; 7 ) \\ 
\hline
$b=x^5 + 1, \ell=0, fh=y^5 - 1, f= 1$  & ( 10, 5; 10, 0; 5 )  \\ 

\hline\hline
\end{tabular}
\end{center}

The second code in the table belongs to an infinite family of self-dual 
$\mathbb{Z}_2\mathbb{Z}_4$-additive cyclic codes that was given in \cite[Theorem 
4]{z2z4SD}.

\begin{proposition}
Let $\alpha$ be even and $\beta$ odd. Let ${\cal C}=\langle (b\mid { 0}), (\ell \mid fh +2f) \rangle$ be a \mbox{${\mathbb{Z}_2 {\mathbb{Z}_4}}$-additive} cyclic code with $b=x^{\frac{\alpha}{2}}-1$, $\ell= 0$, $h=x^\beta-1$ and $f= 1$. Then ${\cal C}$ is a self-dual code of type $(\alpha, \beta; \beta+\frac{\alpha}{2}, 0; \frac{\alpha}{2})$.
\end{proposition}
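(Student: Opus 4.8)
The plan is to exploit that ${\cal C}$ is separable, so that ${\cal C}={\cal C}_X\times{\cal C}_Y$ and ${\cal C}^\perp=({\cal C}_X)^\perp\times({\cal C}_Y)^\perp$, and to treat the two factors separately. First I would record the auxiliary polynomials: $f=1$ and $h=x^\beta-1$ force $g=1$ in $fgh=x^\beta-1$. Then the type follows from Theorem~\ref{TypeDependingDeg}: since $\deg(b)=\frac{\alpha}{2}$, $\deg(h)=\beta$, $\deg(g)=0$ and $\gcd(\ell g,b)=\gcd(0,b)=b$, we get $\gamma=\alpha-\deg(b)+\deg(h)=\beta+\frac{\alpha}{2}$, $\delta=\deg(g)=0$ and $\kappa=\alpha-\deg(\gcd(\ell g,b))=\alpha-\deg(b)=\frac{\alpha}{2}$, i.e. the asserted type $(\alpha,\beta;\beta+\frac{\alpha}{2},0;\frac{\alpha}{2})$.

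For self-duality I would apply the proposition describing the dual of a separable cyclic code, which gives ${\cal C}^\perp=\langle(\frac{x^\alpha-1}{b^*}\mid 0),(0\mid g^*h^*+2g^*)\rangle$, and then identify each component with the corresponding generator of ${\cal C}$. For the binary part the crucial observation is that in $\mathbb{Z}_2[x]$ one has $x^\alpha-1=(x^{\alpha/2}-1)^2$ (because $x^{\alpha/2}+1=x^{\alpha/2}-1$ in characteristic $2$); hence $b=x^{\alpha/2}-1$ is self-reciprocal, $b^*=b$, and $\frac{x^\alpha-1}{b^*}=\frac{x^\alpha-1}{x^{\alpha/2}-1}=x^{\alpha/2}-1=b$, so $({\cal C}_X)^\perp=\langle b\rangle={\cal C}_X$. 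For the quaternary part, $g^*=1$ and $h^*=x^\beta h(x^{-1})=1-x^\beta\equiv 0\pmod{x^\beta-1}$, so $g^*h^*+2g^*\equiv 2\pmod{x^\beta-1}$; and $fh+2f=(x^\beta-1)+2\equiv 2\pmod{x^\beta-1}$ as well, so $({\cal C}_Y)^\perp$ and ${\cal C}_Y$ are both the ideal $\langle 2\rangle$ of $\mathbb{Z}_4[x]/(x^\beta-1)$. Combining the two factors, ${\cal C}^\perp=({\cal C}_X)^\perp\times({\cal C}_Y)^\perp={\cal C}_X\times{\cal C}_Y={\cal C}$.

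Essentially everything here is routine bookkeeping with reciprocal polynomials and reductions modulo $x^\alpha-1$ and $x^\beta-1$; the only places a normalization slip could hide are the fact that $x^{\alpha/2}-1$ divides $x^\alpha-1$ with quotient itself over $\mathbb{Z}_2$, and the collapse of $\langle fh+2f\rangle$ to $\langle 2\rangle$. If one prefers to sidestep the separable-dual proposition, one can instead verify self-orthogonality directly with the bilinear map $\circ$: the product $(b\mid 0)\circ(b\mid 0)$ vanishes because $b b^*\theta_{\mathfrak{m}/\alpha}(x^\alpha)=x^{\mathfrak{m}}-1$ over $\mathbb{Z}_2$ (Proposition~\ref{xnm=xntheta}), $(b\mid 0)\circ(0\mid fh+2f)=0$ is immediate from the definition of $\circ$, and $(0\mid fh+2f)\circ(0\mid fh+2f)=0$ since $fh+2f\equiv 2$ and $2\cdot 2=0$ in $\mathbb{Z}_4$; hence ${\cal C}\subseteq{\cal C}^\perp$, and then Proposition~\ref{Dimension_subcodes} gives $|{\cal C}^\perp|=2^{\alpha+\gamma-2\kappa}4^{\beta-\gamma-\delta+\kappa}=2^{\beta+\alpha/2}=|{\cal C}|$, which forces equality.
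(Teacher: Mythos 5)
Your proof is correct and takes essentially the same route as the paper: both determine the generator polynomials of ${\cal C}^\perp$ from the paper's duality results and read off the type from Theorem~\ref{TypeDependingDeg}, and your choice of the separable-dual proposition (legitimate here since $\ell=0$, so ${\cal C}={\cal C}_X\times{\cal C}_Y$) in place of the paper's citation of Theorem~\ref{DualPolynomials} is if anything the more precise reference, with all the supporting computations ($x^\alpha-1=(x^{\alpha/2}-1)^2$ over $\mathbb{Z}_2$, hence $\frac{x^\alpha-1}{b^*}=b$, and $fh+2f\equiv 2\equiv g^*h^*+2g^*$ modulo $x^\beta-1$) checking out. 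The alternative you sketch (self-orthogonality of the generators under $\circ$ together with the cardinality count from Proposition~\ref{Dimension_subcodes}) is also sound, but it is a bonus rather than a needed ingredient.
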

\begin{proof}
By Theorem \ref{DualPolynomials}, one obtains that 
$\bar{b}=x^{\frac{\alpha}{2}}-1$, $\bar{\ell}=0$, $\bar{h}=x^\beta-1$ and 
$\bar{f}=1.$ Hence ${\cal C}$ is self-dual and, by Theorem \ref{TypeDependingDeg}, 
it is of type $(\alpha, \beta; \beta+\frac{\alpha}{2}, 0; 
\frac{\alpha}{2})$.
\end{proof}

\end{document}